\documentclass[a4paper,UKenglish,cleveref, autoref, thm-restate]{lipics-v2021}

\pdfoutput=1 
\hideLIPIcs  

\title{A Primal-Dual Approach to Randomized Online Bidding with Tail Constraints} 

\author{Royce Kraakman}{Utrecht University}{}{}{}
\author{Bob Krekelberg}{Utrecht University}{}{}{}
\author{Alison Hsiang-Hsuan Liu}{Utrecht University}{}{}{}
\author{Fu-Hong Liu}{Independent Researcher}{}{}{}

\authorrunning{R. Kraakman, B. Krekelberg, A. H. Liu and F. Liu}

\Copyright{Royce Kraakman, Bob Krekelberg, Alison Hsiang-Hsuan Liu and Fu-Hong Liu}

\ccsdesc[500]{Theory of computation~Online algorithms}

\keywords{Online bidding, Tail constraints, Primal-dual approach, Randomized algorithms, Expected competitive ratio} 

\nolinenumbers 

\EventEditors{John Q. Open and Joan R. Access}
\EventNoEds{2}
\EventLongTitle{42nd Conference on Very Important Topics (CVIT 2016)}
\EventShortTitle{CVIT 2016}
\EventAcronym{CVIT}
\EventYear{2016}
\EventDate{December 24--27, 2016}
\EventLocation{Little Whinging, United Kingdom}
\EventLogo{}
\SeriesVolume{42}
\ArticleNo{23}

\usepackage{amsmath}
\usepackage{amssymb}
\usepackage{mathtools}
\usepackage[most]{tcolorbox}
\usepackage{amsthm}

\usepackage{graphicx}
\usepackage{subcaption}

\usepackage{hyperref}

\newcommand{\dA}{A} 
\newcommand{\A}{\mathcal{A}} 
\newcommand{\OPT}{\text{OPT}}
\newcommand{\cost}{\operatorname{cost}}
\newcommand{\E}{\mathbb{E}}
\newcommand{\runtitle}[1]{\textbf{#1}}
\newcommand{\hide}[1]{}
\newcommand{\gmlb}{2.8311}

\newcommand{\arxivonly}[1]{#1}

\makeatletter
\@fleqnfalse
\@mathmargin\@centering
\makeatother

\begin{document}

\maketitle

\begin{abstract}
    Controlling tail risk in randomized algorithms has received increasing attention.
    Randomized online algorithms can achieve a lower expected competitive ratio than the competitive ratio of the best deterministic online algorithm. However, individual executions may still incur poor outcomes. Tail constraints address this risk by limiting the probability that the incurred ratio exceeds a prescribed threshold. We investigate how such constraints affect the structure and performance of randomized online bidding algorithms.

    By revisiting the classical primal-dual lower-bound framework for online bidding, we find that restricting the growth between consecutive bids can yield a stronger lower bound on the expected competitive ratio. We then show that tail constraints with zero exceedance probability impose an asymptotic upper bound on the ratios between consecutive bids. Incorporating this structural property into the primal-dual framework allows us to derive a stronger lower bound. Moreover, the dual analysis suggests a randomized geometric bidding algorithm whose performance matches this bound, providing an exact characterization of the optimal expected competitive ratio under zero-violation tail constraints.

    The insight from our primal-dual analysis also guides the design of algorithms for general tail constraints allowing positive exceedance probabilities. Motivated by the geometric growth suggested by the dual solution, we study randomized periodic bidding algorithms. We show that, for any fixed geometric base, the uniform offset distribution is optimal in terms of both the expected competitive ratio and tail-constraint feasibility. We then characterize the exceedance probability of these algorithms and use this characterization to choose a geometric growth factor that satisfies the prescribed tail constraints, establishing parameter-dependent upper bounds on the optimal expected competitive ratio.
    
    \hide{Our primal-dual analysis also provides insight into the design of the matching randomized algorithm. For general tail constraints that allow positive violation probabilities, we further establish parameter-dependent upper bounds on the optimal expected competitive ratio.}

    \hide{A recently introduced approach is to impose tail constraints that limit the probability of poor outcomes.
    We study the randomized online bidding problem under such tail constraints.
    When the tail constraints require zero probability of exceeding the prescribed thresholds, we determine the optimal expected competitive ratio.
    For general tail constraints that allow positive exceedance probabilities, we derive a parameter-dependent upper bound on the optimal expected competitive ratio.}
    %
\end{abstract}

\section{Introduction}
In the study of online problems, \emph{competitive analysis} is the standard measure for evaluating deterministic online algorithms. It considers the worst-case ratio between the cost of the algorithm and the optimal offline cost over all possible instances.

Randomized online algorithms allow us to go beyond deterministic worst-case guarantees. Conceptually, a randomized online algorithm runs different deterministic online algorithms in different parallel universes according to some probability distribution. Its \emph{expected competitive ratio} measures, for the worst possible instance, the average performance across all these parallel universes.

\medskip

\runtitle{Controlling tail risk.}
Although a good randomized online algorithm performs well on average, consider the particular universe in which we happen to live. The deterministic algorithm running in this universe may still perform badly on a given instance. Average happiness is little comfort if the universe we inhabit is unhappy.

A \emph{tail constraint} limits the probability of such an unhappy outcome. More precisely, a tail constraint $(\gamma, \delta)$ bounds the probability that the competitive ratio exceeds a prescribed threshold $\gamma$ by $\delta$, while we still seek to minimize the randomized algorithm's expected competitive ratio.

\medskip

\runtitle{Necessity of ``bad'' deterministic algorithms.}
Consider all possible deterministic algorithms $\mathcal{D} = \{\text{DET}_i\}_{i\geq 1}$.
A randomized algorithm $\A$ can be considered as a distribution on $\mathcal{D}$.
The expected competitive ratio of $\A$ is defined by $\sup_T \E\left[\frac{\A(T)}{\text{OPT}(T)}\right]$, where $T$ is an instance of the online problem.

For the same instance, different deterministic algorithms may have very different ratios. By carefully assigning probabilities to these algorithms, a randomized algorithm $\A$ attempts to capture instances that would otherwise yield large expected ratios. 
Changing the probability assigned to one deterministic algorithm may improve the expected ratios on some instances while worsening those on others. 
The challenge is therefore to balance the contributions of all deterministic algorithms so that no instance escapes with an excessively large expected ratio. 
In this process, even a deterministic algorithm with a large competitive ratio may be essential to achieving the optimal balance.

How to distribute probability among deterministic algorithms to capture these instances is closely tied to the structure of the underlying online problem. In particular, it depends on how the adversary can exploit the uncertainty in the input to make different groups of deterministic algorithms perform poorly.

See Figure~\ref{fig:first} as a conceptual example. The x-axis represents the probability distribution of the deterministic algorithms used by the randomized algorithm $\A$, arranged in non-decreasing order of their competitive ratios. The length on the x-axis occupied by each deterministic algorithm represents its probability in $\A$. The y-axis represents the ratio of each deterministic algorithm on a given instance $T$. The gray area represents the possible ratios, and its upper boundary represents the competitive ratio of the corresponding deterministic algorithm.

Let $T^*$ be a worst-case instance that maximizes the expected ratio of $\A$. 
In Figure~\ref{fig:first}, the thin solid curve depicts the ratios of all deterministic algorithms used by $\A$ on $T^*$, while the thin dashed curve depicts their ratios on another instance $T$. 
The thick solid and dashed horizontal lines represent the corresponding expected ratios of $\A$, respectively. 
It can be seen that $T^*$ has a higher expected ratio than $T$, although some deterministic algorithms with high competitive ratios perform much better on $T$ than on $T^*$.

\runtitle{Tail constraint changes the optimal structure.}
Now consider introducing a tail constraint $(\gamma, \delta=0)$. 
Since this constraint requires that the probability of exceeding $\gamma$ is zero for every instance, deterministic algorithms with competitive ratios larger than $\gamma$ cannot be assigned positive probability in the randomized algorithm. 
A natural first attempt to satisfy this constraint is to simply remove these deterministic algorithms from $\A$ and redistribute their probability mass among the remaining ones, say evenly, as illustrated in Figure~\ref{fig:second}.

As shown in Figure~\ref{fig:second}, this simple modification decreases the expected ratio on $T^*$ but increases that on $T$, making $T$ the worse instance of the two.
Thus, removing deterministic algorithms with poor worst-case guarantees may change which instance is worst for the randomized algorithm. 
More importantly, it may destroy the balance between the deterministic algorithms that made the original distribution effective. 
This illustrates why finding an optimal randomized algorithm under a tail constraint may require more than simply removing the deterministic algorithms that violate the constraint.
The optimal randomized algorithm may have a substantially different structure.

\hide{Now consider introducing a tail constraint $(\gamma, \delta=0)$. Since the tail constraint restricts that $\Pr\Big(\frac{\A(T)}{\text{OPT}(T)} > \gamma\Big) \leq \delta = 0$, any deterministic algorithm with competitive ratio larger than $\gamma$ should be excluded.
Imagine the same randomized algorithm $\A$ after removing all such deterministic algorithms and evenly spreading the spare distribution over the legal ones. The modified randomized algorithm is depicted in Figure~\ref{fig:second}, where the expected ratio of $T^*$ is now smaller than the expected ratio of some other instance $T$.}

\begin{figure}[t]
    \centering
    \begin{subfigure}[b]{0.48\textwidth}
        \centering
        \includegraphics[width=\linewidth]{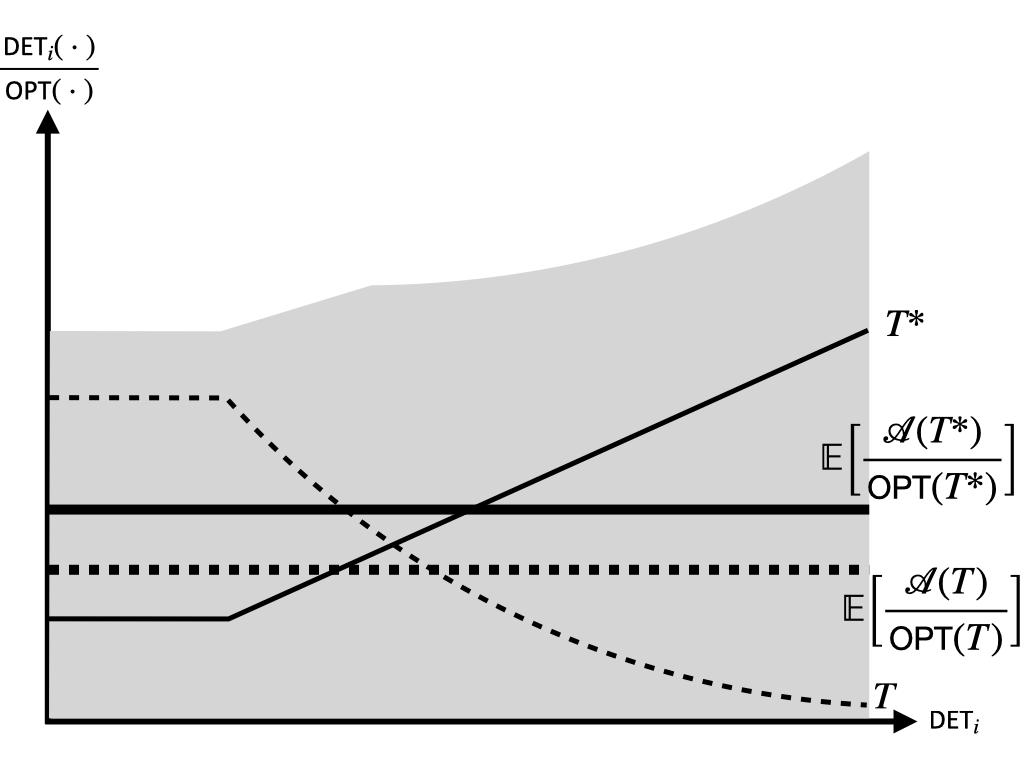}
        \caption{}
        \label{fig:first}
    \end{subfigure}
    \hfill
    \begin{subfigure}[b]{0.48\textwidth}
        \centering
        \includegraphics[width=\linewidth]{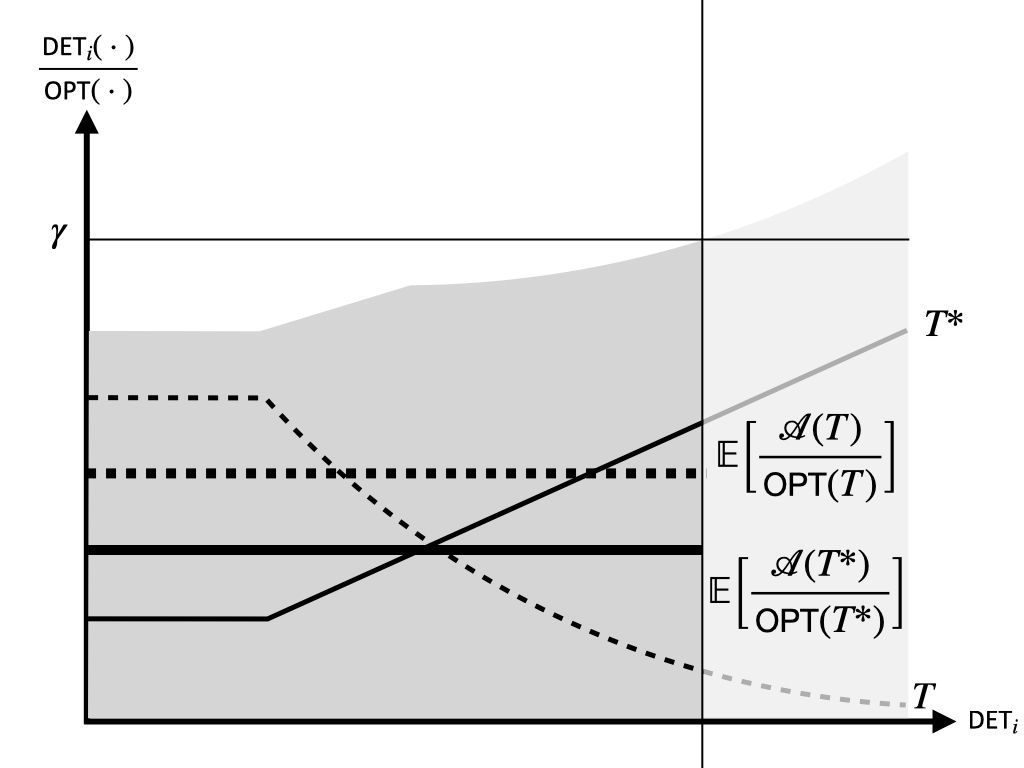}
        \caption{}
        \label{fig:second}
    \end{subfigure}
    \caption{}
    \label{fig:both}
\end{figure}

\medskip

\runtitle{Previous work.}
The model of \emph{online algorithms with tail constraints} was recently introduced by Dinitz et al.~\cite{DBLP:conf/soda/DinitzILMV24}. They initiated this line of research for the ski rental problem and showed that imposing tail constraints can substantially change the structure of optimal randomized algorithms. Cui and Dinitz later extended this model to two-slope ski rental~\cite{DBLP:conf/waoa/CuiD25}.

The concept of tail constraints is closely related to risk measures studied in economics and finance. In particular, \emph{Value-at-Risk (VaR)} concerns the minimum threshold such that the probability of exceeding this threshold is bounded from above by a prescribed probability level. \emph{Conditional Value-at-Risk (CVaR)}, on the other hand, considers the average loss among the worst outcomes beyond a given quantile. Christianson et al.~\cite{DBLP:conf/colt/Christianson0LW24} considered a CVaR-based notion of competitive ratio, which captures the average competitive ratio in the upper tail of its distribution, as determined by a prescribed probability level.

Online bidding was formalized by Chrobak et al.~\cite{chrobak2008incremental} as a framework for studying doubling-based algorithms, with applications to incremental medians.
The classical online bidding problem admits an optimal deterministic competitive ratio of $4$ and an optimal randomized expected competitive ratio of $e$.
Chrobak et al. established the matching randomized lower bound using a linear-programming primal-dual approach, which we extend to the setting with tail constraints.

To the best of our knowledge, prior to our work and the independent concurrent work of Basiak et al.~\cite{BasiakBienkowskiShinTatarczuk26}, tail constraints had only been studied for ski rental and its variants.
These works represent the first extensions of the model beyond ski rental. In particular, Basiak et al. independently established the optimal expected competitive ratio for online bidding under pure tail constraints, using Yao's min-max principle for their lower bound.

\runtitle{Our contributions.}
Our main contribution is a primal-dual approach to understanding randomized online bidding under tail constraints.
By revisiting the classical primal-dual lower-bound framework of Chrobak et al.~\cite{chrobak2008incremental}, we uncover a key structural insight:
the lower-bound argument becomes strictly stronger when the ratio between consecutive bids is bounded from above by a constant strictly smaller than $e$ (Observation~\ref{obs:stronger_lower_bound}).
This observation reveals a promising direction for extending the primal-dual framework to tail-constrained settings.
The central challenge, however, is to establish a connection between the growth of consecutive bids and the tail constraints, which restrict the probability of large ratios.
We bridge this gap by showing that tail constraints with zero violation probability impose an asymptotic upper bound on consecutive-bid ratios (Lemma~\ref{lm:growth_rate_ub} and~\ref{lm:growth_rate}).
Incorporating this structural property into the primal-dual framework allows us to derive stronger lower bounds under tail constraints (Lemma~\ref{lm:delta_0_lb}).

Beyond strengthening the lower-bound analysis, our primal-dual approach also provides insight into the design of optimal bidding strategies (Observation~\ref{obs:with_tail_constraint}).
In particular, the structure of the dual solution suggests a natural geometric growth factor for consecutive bids, motivating a randomized periodic bidding algorithm.
By combining this algorithm with our refined primal-dual lower bound, we obtain a tight characterization of the optimal expected competitive ratio under zero-violation tail constraints (Theorem~\ref{thm:special_case}).
We further extend the algorithmic approach to general tail constraints, establishing upper bounds on the expected competitive ratio (Theorem~\ref{thm:general_case}).
We also show that, for any fixed geometric base, the uniform offset distribution is optimal among randomized periodic bidding algorithms in terms of both the expected competitive ratio and tail-constraint feasibility (Lemma~\ref{lm:choice_of_uniform_offset}).
Thus, when considering periodic algorithms, we can restrict our attention to uniform offset distributions.
Our main results are summarized in the following two theorems.

\begin{theorem}[Optimal ratio for tail constraints with \(\delta_i = 0\)] \label{thm:special_case}
    For a tail constraint $(\gamma_i,0)$ with $\gamma_i \ge 4$, the expected competitive ratio of the online bidding problem is exactly
    \(
    \frac{q_i^*}{\ln q_i^*},
    \)
    where
    \[
    q_i^* =
    \begin{cases}
    \frac{\gamma_i + \sqrt{\gamma_i (\gamma_i - 4)}}{2},
    & \text{if } 4 \le \gamma_i \le \frac{e^2}{e-1}, \\
    e,
    & \text{if } \gamma_i > \frac{e^2}{e-1}.
    \end{cases}
    \]
    For tail constraints $\{(\gamma_i, 0)\}_{i=1}^m$,
    the expected competitive ratio is exactly
    \(
    \frac{q^*}{\ln q^*},
    \)
    where
    \(
        q^*=\min_{i\in[m]} q_i^*.
    \)
\end{theorem}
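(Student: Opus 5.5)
The proof has two halves: an algorithm meeting $q^*/\ln q^*$ while respecting every constraint, and a matching lower bound from the primal-dual framework. I would do the single constraint $(\gamma_i,0)$ first and then the family $\{(\gamma_i,0)\}_{i=1}^m$.

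\textbf{Upper bound.} I would use the randomized geometric (periodic) algorithm suggested by the dual solution (Observation~\ref{obs:with_tail_constraint}). Draw $U$ uniformly from $[0,1)$ and bid $q^{k+U}$ for $k\in\mathbb{Z}$. The classical computation gives an expected ratio of
\[
\frac{1}{\ln q}\int_{1}^{q}\Bigl(1+\frac{q}{q-1}\Bigr)\frac{dx}{x}\cdot\frac{1}{?}
\]
type, which in total equals $q/\ln q$. This quantity is decreasing in $q$ on $(1,e]$ and is minimized at $q=e$.

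For the tail constraint, the worst realized ratio of a geometric sequence with base $q$ is attained when the target lies just above a bid. In the limit it equals $\sum_{j\le k+1} q^j/q^k = q^2/(q-1)$. The constraint $(\gamma_i,0)$ therefore holds iff $q^2/(q-1)\le\gamma_i$, i.e.\ iff $q^2-\gamma_i q+\gamma_i\le 0$. For $\gamma_i\ge4$ this holds exactly for $q$ between the two roots, with the larger root $(\gamma_i+\sqrt{\gamma_i(\gamma_i-4)})/2$. Since $q/\ln q$ decreases on $(1,e)$, we pick $q$ as large as possible but at most $e$. This gives $q_i^*$, and the threshold $\gamma_i=e^2/(e-1)$ is exactly where the larger root equals $e$.

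With several constraints, the base $q^*=\min_i q_i^*$ satisfies all of them, because each feasible set is an interval containing $[2,q_i^*]$ and $q^*\ge2$.

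\textbf{Lower bound.} Let $\A$ be any randomized algorithm that satisfies $(\gamma_i,0)$. Then every deterministic algorithm in its support has competitive ratio at most $\gamma_i$. By Lemmas~\ref{lm:growth_rate_ub} and~\ref{lm:growth_rate}, this forces an asymptotic upper bound on consecutive-bid ratios. The bound is the larger root of $q^2-\gamma_i q+\gamma_i$: a sequence whose ratios eventually exceed it would push the partial-sum ratio $\sum_{j\le k+1}b_j/b_k$ above $\gamma_i$. I would then apply Lemma~\ref{lm:delta_0_lb}, which plugs this growth restriction into the Chrobak et al.\ primal LP and exhibits a dual solution of value $q/\ln q$ for the growth cap $q$. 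When the cap is at least $e$, we use the unconstrained bound $e$ instead (Observation~\ref{obs:stronger_lower_bound} only helps below $e$). This gives the lower bound $q_i^*/\ln q_i^*$.

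For multiple constraints, any algorithm feasible for all of them is feasible for each one. Its expected ratio is therefore at least $\max_i q_i^*/\ln q_i^*$. Because $q\mapsto q/\ln q$ is decreasing on $(1,e]$, this maximum equals $q^*/\ln q^*$, which matches the upper bound.

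\textbf{Main obstacle.} The delicate step is the lower bound. The growth lemmas give only an \emph{asymptotic} cap, and it holds only per deterministic algorithm in the support. The dual LP must therefore be arranged on a truncated, discretized instance family (targets on a fine grid up to some $n$). In that family the finitely many early bids, where the ratios may be large, contribute vanishing error. I would first try the Chrobak et al.\ scaling argument: normalize each bid sequence so that the cap applies after index $k_0$, let $n\to\infty$, and verify that the dual value converges to $q/\ln q$. I would rely on Lemma~\ref{lm:delta_0_lb} for the precise statement.
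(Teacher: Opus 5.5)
Your proposal follows the paper's proof. The upper bound is exactly Corollary~\ref{cor:delta_0_ub}: the uniform-offset algorithm $\A_q$ with $q=\min\{G_{\gamma_i},e\}$, which is feasible because its realized ratios stay below $q^2/(q-1)\le\gamma_i$. The lower bound is Lemma~\ref{lm:delta_0_lb}, and your treatment of several constraints (the lower bound $\max_i q_i^*/\ln q_i^*$ via monotonicity of $q/\ln q$, and feasibility of $q^*$ because each feasible interval contains $[2,q_i^*]$) is justified more carefully than in the paper. The one thing to fix is the unfinished display with a ``?'' placeholder for the expected ratio: replace it with a citation of Lemma~\ref{lm:exp_comp_ratio}, which gives $q/\ln q$ directly, and start the bids at $k\ge 0$ rather than $k\in\mathbb{Z}$, which only lowers the cost.
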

\begin{theorem}[Upper bound for general tail constraints] \label{thm:general_case}
    For \(q > 1\) and \(\gamma > 1\), let
    \(
        P_q(\gamma)
        :=
        \min \left\{1, \max \left\{0,
            1 - \log_q \left( \frac{\gamma(q-1)}{q} \right)
        \right\} \right\}.
    \)
    Consider a tail constraint \((\gamma_i,\delta_i)\) with \(\gamma_i \ge \gmlb\) and $\delta_i \in (0, 1)$.
    Define
    \[
        q_i^*
        =
        \begin{cases}
            \text{the largest \(q > 1\) satisfying }
            P_q(\gamma_i) = \delta_i,
            & \text{if } 0 < \delta_i \le P_e(\gamma_i), \\
            e,
            & \text{if } \delta_i > P_e(\gamma_i).
        \end{cases}
    \]
    Then the expected competitive ratio of the online bidding problem is at most
    \(
        \frac{q_i^*}{\ln q_i^*}
    \)
    if $q_i^*$ exists.
    More generally, for a collection of tail constraints
    \(
        \{(\gamma_i,\delta_i)\}_{i=1}^m,
    \)
    the expected competitive ratio is at most
    \(
        \frac{q^*}{\ln q^*},
    \)
    where
    \(
        q^*=\min_{i\in[m]}q_i^*,
    \)
    provided that $q_i^*$ exists for all $i$ and this choice satisfies all the constraints.
\end{theorem}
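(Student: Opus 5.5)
We prove Theorem~\ref{thm:general_case} by analyzing the randomized periodic (geometric) bidding algorithm: fix a base $q>1$, draw an offset $x$ uniformly from $[0,1)$, and bid $q^{x+k}$ for $k=0,1,2,\dots$. By Lemma~\ref{lm:choice_of_uniform_offset}, the uniform offset is optimal for a fixed base with respect to both the expected ratio and tail feasibility, so nothing is lost by this restriction. The proof has three steps: compute the expected ratio, compute the exceedance probability, and then choose $q$.

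\textbf{Expected ratio.} Let the target be $u$, and write $u=q^{k+y}$ with $y\in[0,1)$. The algorithm pays the sum of all bids up to the first bid that is at least $u$. That sum is at most $\frac{q}{q-1}$ times the last bid. The last bid is $q^{k+x}$ if $x\ge y$, and $q^{k+1+x}$ otherwise. Integrating over $x$ gives an expected ratio of at most $\frac{q}{q-1}\cdot\frac{q-1}{\ln q}=\frac{q}{\ln q}$, uniformly in $u$. This is the classical computation of Chrobak et al.; the initial finite-sum correction only helps.

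\textbf{Exceedance probability.} The ratio on target $u$ is at most $\frac{q}{q-1}q^{z}$, where $z=(x-y) \bmod 1$ is the relative offset. Since $x$ is uniform, $z$ is uniform on $[0,1)$ for every $u$. Hence
\[
\Pr\Big[\tfrac{\A(u)}{u}>\gamma\Big]\le \Pr\Big[q^{z}>\tfrac{\gamma(q-1)}{q}\Big]=P_q(\gamma),
\]
where the clipping to $[0,1]$ in the definition of $P_q$ accounts for thresholds outside the range of $z$. As $u$ grows, the worst case approaches this bound, so $P_q(\gamma)$ is exactly the relevant sup-exceedance.

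\textbf{Choice of $q$.} The function $q/\ln q$ is minimized at $q=e$ and is increasing for $q>e$. If $\delta_i>P_e(\gamma_i)$, we take $q=e$, which is feasible and gives ratio $e$. Otherwise we need $P_q(\gamma_i)\le\delta_i$ with $q$ as close to $e$ as possible. Here we argue that $q\mapsto P_q(\gamma_i)=1-\ln\!\big(\gamma_i(q-1)/q\big)/\ln q$ is continuous, so a largest root $q_i^*$ of $P_q(\gamma_i)=\delta_i$ exists (when it exists at all). We then need that the feasible set is $\{q\le q_i^*\}$ near this root, i.e., $P_q(\gamma_i)$ is nondecreasing in $q$ there, and that $q_i^*<e$, so that $q_i^*/\ln q_i^*$ is the best attainable value. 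This monotonicity analysis is the main obstacle. We would differentiate $g(q)=\ln(\gamma(q-1)/q)/\ln q$; the condition $\gamma_i\ge\gmlb$ is presumably what guarantees the needed sign of $g'$ on the relevant interval. Failing a full proof of monotonicity, the theorem only needs feasibility at $q_i^*$, which holds by definition since $P_{q_i^*}(\gamma_i)=\delta_i$. The upper bound $q_i^*/\ln q_i^*$ then follows directly from the expected-ratio computation.

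\textbf{Several constraints.} For a collection $\{(\gamma_i,\delta_i)\}_{i=1}^m$, we take $q^*=\min_i q_i^*$. The expected ratio $q^*/\ln q^*$ follows as above. Feasibility of $q^*$ for every constraint is assumed in the statement ("this choice satisfies all the constraints"), so no further argument is needed beyond the per-constraint bound $P_{q^*}(\gamma_i)\le\delta_i$.
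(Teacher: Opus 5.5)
Your proposal is correct and follows the paper's proof. You run $\A_q$, bound its expected ratio by $q/\ln q$ and its exceedance probability by $P_q(\gamma)$ (Lemmas~\ref{lm:exp_comp_ratio} and~\ref{lm:exc_prob}), take $q=e$ when $\delta_i>P_e(\gamma_i)$ and otherwise the largest root $q_i^*$ (feasible since $P_{q_i^*}(\gamma_i)=\delta_i$), and take $\min_i q_i^*$ under the stated feasibility assumption. You are also right that monotonicity matters only for optimality of this choice within the family, not for the stated upper bound. For the record, $q\mapsto P_q(\gamma)$ first falls and then rises, with turning point at the root of $\ln\gamma=\frac{\ln q}{q-1}+\ln\frac{q}{q-1}$; the condition $\gamma\ge\gmlb$ is exactly what makes this point at most $e$, which places the largest root in $(1,e]$.
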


\runtitle{Paper organization.}
In Section~\ref{sec:preliminary}, we define the online bidding problem and introduce the tail constraints.
We then revisit a primal-dual framework for the problem without tail constraints in Section~\ref{sec:previous_primal_dual}.
Section~\ref{sec:delta_0} studies the problem with $\delta_i = 0$.
We introduce a special bidding sequence in Section~\ref{sec:ext_seq} and incorporate its properties into the primal-dual framework in Section~\ref{sec:primal_dual}.
Finally, we discuss the problem with general tail constraints in Section~\ref{sec:general_delta}.
The omitted proofs are provided in the appendix.

\section{Preliminaries} \label{sec:preliminary}

\runtitle{The online bidding problem.}
We consider the online bidding problem, in which an algorithm is faced with an unknown threshold \(T \ge 1\). An algorithm \(\dA\) submits a strictly increasing sequence of bids
\[
    b_1 < b_2 < \cdots
\]
without knowing \(T\). Let \(k\) denote the index of the first bid that reaches or exceeds the threshold, i.e.,
\[
    b_{k-1} < T \leq b_k,
\]
where the condition involving \(b_{k-1}\) is omitted when \(k=1\). The algorithm terminates upon submitting \(b_k\), and incurs cost
\[
    \cost(\dA(T)) \coloneqq \sum_{i=1}^{k} b_i.
\]

Let \(\OPT(T)\) denote an optimal offline algorithm for threshold \(T\). Since \(\OPT\) knows \(T\) in advance, it can submit \(T\) as its first bid and terminate immediately. Hence,
\[
    \cost(\OPT(T)) = T.
\]
For a deterministic algorithm \(\dA\), its \emph{ratio} on an instance \(T\) is
\(
    \frac{\cost(\dA(T))}{\cost(\OPT(T))},
\)
and the algorithm's \emph{competitive ratio} is
\[
    \sup_{T \ge 1}
    \frac{\cost(\dA(T))}{\cost(\OPT(T))}
    =
    \sup_{T \ge 1}
    \frac{\cost(\dA(T))}{T}.
\]
The \emph{competitive ratio of the online problem} is the infimum of the competitive ratios of all deterministic algorithms.

For a randomized algorithm \(\A\), where the expectation is taken over the internal randomness of \(\A\), its \emph{expected ratio} on threshold \(T\) is
\(
    \E\!\left[
        \frac{\cost(\A(T))}{\cost(\OPT(T))}
    \right].
\)
Accordingly, the randomized algorithm's \emph{expected competitive ratio} is
\[
    \sup_{T \ge 1}
    \E\!\left[
        \frac{\cost(\A(T))}{\cost(\OPT(T))}
    \right]
    =
    \sup_{T \ge 1}
    \E\!\left[
        \frac{\cost(\A(T))}{T}
    \right].
\]
The \emph{expected competitive ratio of the online problem} is the infimum of the expected competitive ratios of all randomized algorithms.

The competitive ratio of the online bidding problem is \(4\), whereas the expected competitive ratio of the problem is \(e\) \cite{chrobak2008incremental}.

\runtitle{Tail constraints.}
A tail constraint limits the probability that a randomized algorithm incurs a ratio exceeding a prescribed threshold. To formalize this, we first define the \(\gamma\)-exceedance probability as follows.

\begin{definition}[\(\gamma\)-exceedance probability]
Let \(\A\) be a randomized algorithm. For \(\gamma > \gmlb\), the \emph{\(\gamma\)-exceedance probability} of \(\A\) is
\[
    \sup_{T \ge 1}
    \Pr\left[
        \frac{\operatorname{cost}(\A(T))}{T}>\gamma
    \right],
\]
where the probability is taken over the internal randomness of \(\A\).
\end{definition}

A \emph{tail constraint} is specified by a pair of parameters $(\gamma,\delta)$, where $\gamma > \gmlb$ and $\delta \in [0,1]$.
A randomized algorithm $\A$ \emph{satisfies} the tail constraint $(\gamma,\delta)$ if its $\gamma$-exceedance probability is at most $\delta$, i.e.,
\[
\sup_{T \ge 1}
\Pr\left[
\frac{\operatorname{cost}(\A(T))}{T}>\gamma
\right]
\le \delta.
\]
Equivalently, for every \(T \ge 1\),
\(
    \Pr\left[
        \frac{\operatorname{cost}(\A(T))}{T}>\gamma
    \right]
    \le \delta.
\)

More generally, we consider a collection of tail constraints
\[
    \{(\gamma_i,\delta_i)\}_{i=1}^{m}.
\]
A randomized algorithm $\A$ is \emph{feasible} with respect to this collection if it satisfies all tail constraints simultaneously, i.e., for every \(i\in[m]\) and every $T\geq 1$,
\(
    \Pr\left[
        \frac{\operatorname{cost}(\A(T))}{T}>\gamma_i
    \right]
    \le \delta_i.
\)


Imposing a tail constraint restricts the class of feasible randomized algorithms and may therefore increase the optimal expected competitive ratio.

\subsection{Revisiting the primal-dual lower bound without tail constraints} \label{sec:previous_primal_dual}

Our lower-bound analysis under tail constraints builds on the primal-dual framework of Chrobak et al.~\cite{chrobak2008incremental} for randomized online bidding.
We begin by revisiting their lower-bound argument in the unconstrained setting.
A closer examination of their proof reveals that a strictly stronger lower bound can be obtained if the ratio between every pair of consecutive bids is strictly less than~$e$.
This observation provides a starting point for extending their analysis to the setting with tail constraints.

Chrobak et al.~\cite{chrobak2008incremental} restrict attention to the finite universe
\[
[n]:=\{1,2,\ldots,n\},
\]
so that both the threshold and all bids are required to lie in \([n]\).
They establish the lower bound for this restricted problem with arbitrarily large \(n\).
Since the unrestricted problem includes every finite universe \([n]\) as a special case, any asymptotic lower bound obtained in this setting also applies to the original problem over the real-valued domain.

The authors establish the lower bound using a primal-dual approach.
Specifically, they formulate a factor-revealing linear program whose objective captures the expected competitive ratio.
They then construct a feasible dual solution whose objective value provides a lower bound on the optimal primal objective value by weak duality.
The linear program is formulated as follows.

\begin{figure}[ht]
    \centering
    
    \fbox{%
    \begin{minipage}[t]{0.47\textwidth}
    \footnotesize
    \textbf{Primal}
    \[
        \begin{aligned}
        \text{min}\quad
        & \beta \\
        \text{s.t.}\quad
        & \beta
          - \sum_{t=1}^T \sum_{b=1}^n
            \frac{b}{T} x(t,b)
          \ge 0,
        && \forall T \in [n], \\
        & \sum_{t=1}^T
          \sum_{b=T}^{n}
          x(t,b)
          \ge 1,
        && \forall T \in [n], \\
        & x(t,b) \ge 0,
        && \forall t,b \in [n], \\
        & \beta \ge 0.
        \end{aligned}
    \]
    \end{minipage}
    }
    \hfill
    \fbox{%
    \begin{minipage}[t]{0.47\textwidth}
    \footnotesize
    \textbf{Dual (after normalization)}
    \[
        \begin{aligned}
        \text{max}\quad
        & \frac{\sum_{T=1}^n \mu(T)}{\sum_{T'=1}^n \pi(T')} \\
        \text{s.t.}\quad
        & \sum_{T=t}^b \mu(T)
          -
          \sum_{T=t}^n \frac{b}{T}\pi(T)
          \le 0,\\ 
          &&&\forall t,b \in [n], \\
        & \pi(T) \ge 0,
        && \forall T \in [n], \\
        & \mu(T) \ge 0,
        && \forall T \in [n].
        \end{aligned}
    \]
    \end{minipage}
    }
    
    \caption{Primal and dual linear programs for the classic online bidding problem~\cite{chrobak2008incremental}}
    \label{fig:previous_primal_dual}
\end{figure}

\hide{
\begin{figure}[ht]
\centering

\begin{minipage}[t]{0.49\textwidth}
\begin{tcolorbox}[title=Primal, colback=white, colframe=black]
\footnotesize
\[
    \begin{aligned}
    \text{min}\quad
    & \beta \\
    \text{s.t.}\quad
    & \beta
      - \sum_{t=1}^T \sum_{b=1}^n
        \frac{b}{T} x(t,b)
      \ge 0,
    \forall T \in [n], \\
    & \sum_{t=1}^T
      \sum_{b=T}^{n}
      x(t,b)
      \ge 1,
    \forall T \in [n], \\
    & x(t,b) \ge 0,
    \forall t,b \in [n], \\
    & \beta \ge 0.
    \end{aligned}
\]
\end{tcolorbox}
\end{minipage}
\hfill
\begin{minipage}[t]{0.49\textwidth}
\begin{tcolorbox}[title=Dual (after normalization), colback=white, colframe=black]
\footnotesize
\[
    \begin{aligned}
    \text{max}\quad
    & \frac{\sum_{T=1}^n \mu(T)}{\sum_{T'=1}^n \pi(T')} \\
    \text{s.t.}\quad
    & \sum_{T=t}^b \mu(T)
      -
      \sum_{T=t}^n \frac{b}{T}\pi(T)
      \le 0, \\&\forall t,b \in [n], \\
    & \pi(T) \ge 0, \forall T \in [n], \\
    & \mu(T) \ge 0,
    \forall T \in [n].
    \end{aligned}
\]
\end{tcolorbox}
\end{minipage}

\caption{Primal and dual linear programs.}
\end{figure}
}

\runtitle{The factor revealing linear program~\cite{chrobak2008incremental}.}
Consider a bidding sequence \((b_i)_{i \geq 0}\), where \(b_0=0\) for notational convenience.
Let
\[
    X(t,b)
    =
    \begin{cases}
    1, & \text{if \(t-1\) and \(b\) are consecutive bids in \((b_i)_{i\geq0}\)},\\
    0, & \text{otherwise},
    \end{cases}
\]
for all \(t,b\in[n]\),
and let \(x(t,b)= \E[X(t,b)]\), which can be seen as the probability that $t-1$ and $b$ are consecutive bids in the algorithm's bidding sequence.
Choose a sufficiently large integer \(N\) and set
\[
n= \left\lceil N^2\log N \right\rceil.
\]
The primal linear program is given in Figure~\ref{fig:previous_primal_dual}.

The primal linear program minimizes $\beta$, which represents an upper bound on the expected competitive ratio of the randomized algorithm. The first set of constraints ensures that, for every threshold $T$, the expected total cost is at most $\beta T$. The second set of constraints ensures that the algorithm reaches or exceeds every threshold \(T\) with probability \(1\). Every randomized bidding algorithm induces a feasible solution to this LP. Hence, the optimal LP value provides a lower bound on the optimal expected competitive ratio.

\runtitle{A feasible dual solution and a lower bound of the expected competitive ratio.}
After normalization, the corresponding dual program can be written as the form in Figure~\ref{fig:previous_primal_dual}.
The dual variables are constructed as follows.
\[
\begin{aligned}
\pi(T) &=
\begin{cases}
    \dfrac{1}{T}, & \text{if } N \le T \le N^2 \log N,\\
    0, & \text{otherwise},
\end{cases}
\\[-1ex]
\end{aligned}
\qquad
\begin{aligned}
\mu(T) &=
\begin{cases}
    \dfrac{\alpha}{T}, & \text{if } N \le T \le N^2,\\
    0, & \text{otherwise},
\end{cases}
\end{aligned}
\]

\hide{
\[
    \pi(T) =
    \begin{cases}
      \frac{1}{T}, \quad & \text{if } N \leq T \leq N^2 \log N, \\
      0, \quad & \text{otherwise},
    \end{cases}
\]
\[
    \mu(T) =
    \begin{cases}
      \frac{\alpha}{T}, \quad & \text{if } N \leq T \leq N^2, \\
      0, \quad & \text{otherwise},
    \end{cases}
\]
}
where $\alpha$ is a parameter to be chosen later.

Under this construction, the dual objective value approaches \(\alpha\) as \(N\to\infty\):
\[
    \frac{\sum_{T=1}^n \mu(T)}{\sum_{T=1}^n \pi(T)}
    =
    \alpha\ \frac{H_{N^2} - H_{N-1}}{H_{N^2 \log N} - H_{N-1}}
    \approx_{N \to \infty} \alpha.
\]

Dividing the dual constraint by \(b\) yields
\[
    \sum_{T=t}^n \frac{1}{T}\pi(T)
    \geq
    \sum_{T=t}^b \frac{1}{b} \mu(T).
\]
To establish dual feasibility for a suitable choice of \(\alpha\), they consider three cases according to the value of \(t\).
For $t > N^2$, the right-hand side is zero, and hence the constraint is always satisfied.
For \(t<N\), both \(\pi(T)\) and \(\mu(T)\) are zero for all \(T<N\).
Thus, the constraint is identical to that for \(t=N\), and it suffices to verify the constraint at \(t=N\).

For $N \leq t \leq N^2$, they show that the left-hand side satisfies
\(
    \sum_{T=t}^n \frac{1}{T}\pi(T)
    \geq_{N \to \infty}
    \frac{1}{t}.
\)
The right-hand side can be upper-bounded as follows:
\(
    \sum_{T=t}^b \frac{1}{b} \mu(T)
    \leq \frac{\alpha}{b} \ln \frac{b}{t - 1}.
\)
To determine the worst case, they maximize the function
\begin{equation} \label{eq:b_and_t}
    \frac{1}{b} \ln \frac{b}{t - 1}.
\end{equation}
Let $z$ be defined by $b = z(t - 1)$.
The function is maximized at $z=e$ and the maximum value is
\(
    \frac{1}{e(t-1)}.
\)
Moreover, $\frac{1}{t-1}$ approaches $\frac{1}{t}$ as $t \to \infty$, which is ensured by taking $N \to \infty$.
To ensure that
\(
    \frac{\alpha}{b} \ln \frac{b}{t - 1}
    \leq
    \frac{1}{t},
\)
they choose $\alpha=e$.
Therefore, the constructed solution is feasible.

\runtitle{What we learn from the primal-dual about the optimal randomized algorithm's behavior.}
The maximizer of~\eqref{eq:b_and_t} offers a revealing clue about the primal bidding behavior associated with a worst-case instance.
Intuitively, the dual can be viewed as searching for a worst-case instance, with an optimal dual solution representing such an instance.
By the definition of $X(t,b)$, $t-1$ and $b$ are consecutive bids.
Since~\eqref{eq:b_and_t} is maximized when $b=e(t-1)$, this suggests that consecutive bids in an optimal primal solution should differ by a factor of $e$ on a worst-case instance.
Interestingly, this mirrors the behavior of the corresponding upper bound.

The role of $e$ does not end with this connection.
Equation~\eqref{eq:b_and_t} is increasing for $z\in(1,e)$.
Hence, restricting the ratio of every pair of consecutive bids to at most some $q<e$ yields a strictly stronger lower bound.
This leads to the following observation.

\begin{observation}[Strengthening the primal-dual lower bound] \label{obs:stronger_lower_bound}
Suppose that there exists a constant $q<e$ such that
\[
    \frac{b_{i+1}}{b_i} \leq q
    \qquad \text{for all } i,
\]
where $b_i$ and $b_{i+1}$ are consecutive bids.
Then, for the class of bidding algorithms satisfying this condition, the primal-dual argument of Chrobak et al.~\cite{chrobak2008incremental} yields a strictly stronger lower bound on the expected competitive ratio than the classical lower bound for the unconstrained problem.
\end{observation}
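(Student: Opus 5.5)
The plan is to rerun the Chrobak et al.\ dual construction unchanged, with $\pi(T)=1/T$ on $[N,N^2\log N]$ and $\mu(T)=\alpha/T$ on $[N,N^2]$, and to observe that the consecutive-bid restriction removes some primal variables. If every admissible algorithm satisfies $b_{i+1}\le q\,b_i$ with $q<e$, then $x(t,b)=0$ whenever $b>q(t-1)$. The primal restricted to this class therefore uses only the variables $x(t,b)$ with $t\le b\le q(t-1)$ (and the first-bid variables with $t=1$, which are irrelevant for large $N$). Correspondingly, the dual keeps only the constraints indexed by these pairs $(t,b)$. Weak duality still applies: any dual solution satisfying just these constraints lower-bounds the restricted primal optimum, which in turn lower-bounds the expected competitive ratio of every algorithm in the class.

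Next I would redo the feasibility check over the restricted index set. The cases $t>N^2$ and $t<N$ go through exactly as before. For $N\le t\le N^2$ the left-hand side is still asymptotically at least $1/t$. The right-hand side is at most $\frac{\alpha}{b}\ln\frac{b}{t-1}=\frac{\alpha}{t-1}\cdot\frac{\ln z}{z}$, where $z=b/(t-1)\in(1,q]$. Since $\ln z/z$ is increasing on $(1,e)$, its maximum over $(1,q]$ is $\ln q/q$, attained at $z=q$. Feasibility thus only requires $\alpha\cdot\frac{\ln q}{q}\cdot\frac{t}{t-1}\le 1$ asymptotically. Hence we may take $\alpha=\frac{q}{\ln q}$, up to a $(1-o(1))$ factor that vanishes as $N\to\infty$.

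Finally, the dual objective tends to $\alpha$ as $N\to\infty$. This gives the lower bound $\frac{q}{\ln q}$, and since $z\mapsto z/\ln z$ is strictly decreasing on $(1,e)$, we get $\frac{q}{\ln q}>\frac{e}{\ln e}=e$. This is strictly stronger than the classical bound $e$.

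The main technical point is the handling of asymptotic slack. The inequality $\sum_{T\ge t}\pi(T)/T\ge 1/t$ only holds in the limit: it fails near $t=N^2$ unless the tail up to $N^2\log N$ is long enough. Similarly, $t-1$ versus $t$ and the integer rounding of $b\le q(t-1)$ introduce $(1+o(1))$ factors. I would absorb all of these by choosing $\alpha=(1-\varepsilon_N)\frac{q}{\ln q}$ with $\varepsilon_N\to0$, exactly as in the original argument.

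A secondary issue is that $q$ must be a fixed constant independent of $N$, so that the gap between $\frac{q}{\ln q}$ and $e$ survives the limit.
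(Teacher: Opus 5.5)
Your route is the paper's own argument: restrict to pairs with $b\le q(t-1)$, keep the Chrobak et al.\ dual, and use that $\ln z/z$ is increasing on $(1,e)$, so its maximum on $(1,q]$ is $\ln q/q$ and $\alpha=q/\ln q>e$. The paper justifies the observation by exactly this monotonicity remark and carries it out for $q=G_\gamma$ in Section~\ref{sec:primal_dual}.

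However, one step fails as written: the first-bid pairs are not irrelevant. Since $b_0=0$, the ratio condition says nothing about $b_1$. So every variable $x(1,b)$ survives, and with it every dual constraint $\sum_{T=1}^{b}\mu(T)\le\sum_{T=1}^{n}\frac{b}{T}\pi(T)$. Because $\mu$ and $\pi$ vanish below $N$, this is literally the old $(N,b)$ constraint with $b$ \emph{unrestricted}. Your claim that the case $t<N$ ``goes through exactly as before'' reduces it to $t=N$, but you have only verified $t=N$ for $b\le q(N-1)$. Take $b=\lceil e(N-1)\rceil$: the left side is $\alpha(H_b-H_{N-1})\to\alpha$, while the right side is $b\sum_{T=N}^{N^2\log N}T^{-2}\to e$. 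So for $\alpha=q/\ln q>e$ the unchanged dual is infeasible for every large $N$. This is just the original worst case (a free jump to about $eN$) that pins $\alpha$ at $e$.

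The repair is easy, but it does change the dual. Let $\mu(T)=\alpha/T$ only on $[\lceil cN\rceil,N^2]$ for a constant $c>\alpha/e$ (e.g.\ $c=\alpha$), and keep $\pi$ as before. Fix any $t\le N$ and $b\ge\lceil cN\rceil$, and put $y=b/(\lceil cN\rceil-1)$. The left side is at most $\alpha\ln y\le \alpha y/e$. The right side is at least $b(\frac1N-\frac{1}{N^2\log N})\ge cy(1-o(1))$. Hence the constraint holds for large $N$; for $b<\lceil cN\rceil$ the left side is $0$.

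The constraints with $N<t\le N^2$ only get easier because $\mu$ decreased. The objective still tends to $\alpha$, since $\ln(N/c)/\ln(N\log N)\to 1$. A bonus is that you no longer need the ratio bound for pairs with $t\le N$ at all. That is exactly where rounding bids to integers can break $b\le q(t-1)$ by more than a $(1+o(1))$ factor.

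The paper's explicit LP in Figure~\ref{fig:primal_dual} handles this differently: it switches off the $\mu$-term for all $t\le N$ by counting coverage only from previous bids at least $N$. That covering constraint is not implied for thresholds just above $N$; for $T=N+1$ it would force $N$ to be a bid almost surely. So shifting the support of $\mu$ is the safer fix.
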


\section{Tail constraints with $\delta_i = 0$} \label{sec:delta_0}

To study the online bidding problem with tail constraints, we begin with the special case where $\delta_i = 0$.
Since $\delta_i = 0$ requires the $\gamma_i$-exceedance probability to be zero, we must have $\gamma_i \ge 4$, as the optimal deterministic competitive ratio is $4$.
We first consider a single tail constraint $(\gamma, 0)$
and then extend our analysis to multiple tail constraints.

As discussed in the previous section, Observation~\ref{obs:stronger_lower_bound} suggests that an upper bound on the ratios between consecutive bids could help us extend the lower bound analysis of the randomized online bidding problem to the setting with tail constraints.
This motivates us to investigate whether the tail constraints imply such an upper bound.
However, the value of this upper bound is not immediately clear.
To derive this bound, we construct a special bidding sequence and use its consecutive-bid ratios to establish an asymptotic upper bound for any feasible bidding sequence.

\subsection{Extremal bidding sequence} \label{sec:ext_seq}

By imposing the tail constraint $(\gamma, 0)$, we show that every realization of a randomized algorithm is $\gamma$-feasible, as defined below.

\begin{definition}[\(\gamma\)-feasible bidding sequence] \label{def:feas_seq}
    Let \((b_1,b_2,\ldots)\) be a bidding sequence, and set \(b_0=0\) for notational convenience. We say that \((b_i)_{i\ge 0}\) is \emph{\(\gamma\)-feasible} if
    \[
    \frac{\sum_{i=1}^{j} b_i}{b_{j-1}+1}\le \gamma
    \qquad\text{for every } j\ge 1.
    \]
\end{definition}

\begin{lemma}[Feasibility of deterministic realizations]
    Let $\A$ be a randomized algorithm satisfying the tail constraint $(\gamma,0)$. Then every deterministic algorithm in the support of $\A$ is $\gamma$-feasible.
\end{lemma}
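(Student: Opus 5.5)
The plan is to argue by contrapositive: if some deterministic algorithm in the support of $\A$ fails to be $\gamma$-feasible, we exhibit a single threshold $T$ on which $\A$ exceeds ratio $\gamma$ with positive probability.

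First, fix a deterministic bidding sequence $(b_i)_{i\ge0}$ and an index $j\ge1$. For every threshold $T\in(b_{j-1},b_j]$, the sequence terminates exactly at $b_j$ and pays $\sum_{i=1}^j b_i$. The ratio $\sum_{i=1}^j b_i/T$ is decreasing in $T$, so its supremum over this interval is approached as $T\downarrow b_{j-1}$. In the finite-universe setting of Section~\ref{sec:previous_primal_dual}, bids and thresholds are integers. The smallest threshold forcing termination at $b_j$ is then $T=b_{j-1}+1$, which is also $\ge1$ when $j=1$ since $b_0=0$. Its ratio is exactly $\sum_{i=1}^j b_i/(b_{j-1}+1)$. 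So violating the inequality in Definition~\ref{def:feas_seq} at index $j$ means the deterministic algorithm has ratio $>\gamma$ on the concrete threshold $T^\star=b_{j-1}+1$. Over the reals, any $T\in(b_{j-1},b_{j-1}+1]$ gives ratio at least this value, so the same threshold works.

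Second, we need the violating deterministic algorithm to carry positive probability at $T^\star$. If $\A$ has countable (or finite, as in the $[n]$ universe) support, this is immediate. Each support element has positive mass, so $\Pr[\cost(\A(T^\star))/T^\star>\gamma]>0=\delta$, contradicting the tail constraint.

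For a general support, interpreted as the topological support of the distribution over sequences, I would use an openness argument. The set of sequences whose first $j$ bids satisfy $\sum_{i=1}^j b_i > \gamma(b_{j-1}+1)$ is defined by a strict inequality in finitely many coordinates, hence is open. Also, $T^\star$ depends on $b_{j-1}$. To handle this, I would fix $T^\star$ from the given violating sequence and use the strict inequality to perturb slightly. Nearby sequences still terminate at their $j$-th bid on $T^\star$, since $b'_{j-1}<T^\star\le b'_j$ holds under small perturbation once we take $T^\star$ slightly above $b_{j-1}$ but within the slack of the strict inequality. Their ratio on $T^\star$ still exceeds $\gamma$. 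This neighborhood has positive measure by the definition of support, which again gives a contradiction.

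I expect the main obstacle to be exactly this measure-theoretic step. It requires choosing $T^\star$ so that both the termination index and the strict violation are stable under perturbation. In the discrete setting the paper actually uses, this step disappears, and the proof reduces to the single-threshold computation of the first step.
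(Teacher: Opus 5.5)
Your proof is correct and follows the paper's argument. You take $T=b_{j-1}+1$, use integrality and strict monotonicity of the bids to get $b_{j-1}<T\le b_j$, so the ratio is exactly $\sum_{i=1}^j b_i/(b_{j-1}+1)>\gamma$, and then use the positive probability of that realization to contradict $\delta=0$. Your openness argument for non-atomic supports goes beyond the paper, which simply asserts that a support element is selected with positive probability. The cleanest version of that argument keeps $T^\star=b_{j-1}+1$ and uses the open set $\{b'_{j-1}<T^\star,\ \sum_{i\le j}b'_i>\gamma T^\star\}$: every sequence in it stops at index at least $j$ on $T^\star$, which also avoids an illegal $T^\star<1$ when $j=1$.
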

\begin{proof}
    Suppose, for contradiction, that there exists a deterministic algorithm \(\dA\) in the support of \(\A\) whose bidding sequence \((b_i)_{i \ge 0}\) is not \(\gamma\)-feasible. That is, by Definition~\ref{def:feas_seq}, there exists an index \(j\) such that
    \(
    \frac{\sum_{i=1}^j b_i}{b_{j-1}+1}>\gamma.
    \)
    We show that there exists an instance $T$ on which $\dA$ incurs a ratio larger than $\gamma$. Since $\dA$ is in the support of $\A$, it is selected with positive probability. Therefore, $\A$ also has a positive probability of exceeding $\gamma$ on the same instance, contradicting the tail constraint $(\gamma,0)$.

    Consider the threshold
    \(
    T:=b_{j-1}+1.
    \)
    Since the bids are integral and strictly increasing, we have
    \(
    b_{j-1}<T\le b_j.
    \)
    Hence, \(b_j\) is the first bid of \(\dA\) that reaches or exceeds \(T\), and therefore
    \(
    \cost(\dA(T))=\sum_{i=1}^j b_i.
    \)
    Since \(\cost(\OPT(T))=T=b_{j-1}+1\), it follows that
    \[
    \frac{\cost(\dA(T))}{\cost(\OPT(T))}
    =
    \frac{\sum_{i=1}^j b_i}{b_{j-1}+1}
    >
    \gamma.
    \]
    
    Because \(\dA\) lies in the support of \(\A\), it is selected with positive probability. Consequently,
    \[
    \Pr\!\left[
    \frac{\cost(\A(T))}{\cost(\OPT(T))}>\gamma
    \right]>0,
    \]
    contradicting the assumption that \(\A\) satisfies the tail constraint \((\gamma,0)\). Therefore, every deterministic algorithm in the support of \(\A\) is \(\gamma\)-feasible.
\end{proof}

We consider a particular $\gamma$-feasible bidding sequence that is extremal in the sense that the defining inequality is tight at every step.

\begin{definition}[Extremal bidding sequence]
    The \emph{extremal bidding sequence}, denoted by \((b_i^*)_{i\ge 0}\), is defined by setting \(b_0^*=0\) and requiring
    \[
    \frac{\sum_{i=1}^{j} b_i^*}{b_{j-1}^*+1}
    =
    \gamma
    \qquad\text{for every } j\ge 1.
    \]
\end{definition}

\begin{lemma}[Closed form of the extremal bidding sequence] \label{lm:closed_form}
    For \(\gamma \ge 4\), the extremal bidding sequence \((b_i^*)_{i\ge 0}\) admits the following closed form. If \(\gamma>4\), let
    \(
    s=\sqrt{\gamma(\gamma-4)}.
    \)
    Then, for every \(j\ge 1\),
    \[
    b_j^*
    =
    \frac{\gamma}{s}
    \left[
    \left(\frac{\gamma+s}{2}\right)^j
    -
    \left(\frac{\gamma-s}{2}\right)^j
    \right].
    \]
    If \(\gamma=4\), then, for every \(j\ge 1\),
    \(
    b_j^*=j\,2^{j+1}.
    \)
\end{lemma}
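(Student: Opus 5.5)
Write $S_j=\sum_{i=1}^j b_i^*$, so the defining condition reads $S_j=\gamma(b_{j-1}^*+1)$ for every $j\ge1$, with $b_0^*=0$. The plan is to turn this into a linear recurrence with constant coefficients and then solve it by the characteristic-root method.

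\textbf{Step 1: initial values and the recurrence.} At $j=1$ the condition gives $b_1^*=S_1=\gamma$. At $j=2$ it gives $b_1^*+b_2^*=\gamma(\gamma+1)$, so $b_2^*=\gamma^2$. For $j\ge2$, subtract the condition at $j-1$ from the condition at $j$. The additive constants $\gamma\cdot 1$ cancel, leaving
\[
b_j^*=\gamma\,(b_{j-1}^*-b_{j-2}^*), \qquad j\ge 2.
\]
For $j=2$ this is consistent with the values above, since $b_0^*=0$. So $b_j^*=\gamma b_{j-1}^*-\gamma b_{j-2}^*$ holds for all $j\ge2$, with $b_0^*=0$ and $b_1^*=\gamma$.

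\textbf{Step 2: solve the recurrence.} The characteristic polynomial is $x^2-\gamma x+\gamma=0$, with roots $r_\pm=\frac{\gamma\pm s}{2}$ where $s=\sqrt{\gamma(\gamma-4)}$. These roots are real exactly because $\gamma\ge4$.

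For $\gamma>4$ the roots are distinct, so $b_j^*=A r_+^j+B r_-^j$. The condition $b_0^*=0$ forces $B=-A$. Then $b_1^*=A(r_+-r_-)=As=\gamma$ gives $A=\gamma/s$, which is the claimed formula.

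For $\gamma=4$ there is a double root $2$, so $b_j^*=(C+Dj)2^j$. The condition $b_0^*=0$ gives $C=0$, and $b_1^*=4$ gives $D=2$. Hence $b_j^*=j2^{j+1}$.

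Finally, I would check $j=2$ as a sanity check: $\frac{\gamma}{s}(r_+^2-r_-^2)=\frac{\gamma}{s}\cdot\gamma s=\gamma^2$, as required.

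\textbf{Main obstacle.} The only real subtlety is the base case: the differencing step requires $j\ge2$. It must therefore be confirmed that the recurrence, started from $b_0^*=0$ and $b_1^*=\gamma$, reproduces $b_2^*$ correctly, which Step 1 does. Beyond that, the sequence is uniquely determined by the definition, since each $b_j^*$ is fixed by $S_j$ and the earlier terms. Consequently, any formula that satisfies the recurrence and the initial values coincides with $(b_j^*)$, which can also be stated as a short induction. I would note in passing that this is a sequence of reals; integrality, as used in the feasibility lemma, is not needed for the closed form itself.
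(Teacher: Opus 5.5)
Your proposal is correct and follows essentially the same route as the paper: difference consecutive defining equations to get $b_j^*=\gamma(b_{j-1}^*-b_{j-2}^*)$ for $j\ge 2$ with $b_0^*=0$, $b_1^*=\gamma$, then solve via the characteristic polynomial $x^2-\gamma x+\gamma$, treating the distinct-root case $\gamma>4$ and the double-root case $\gamma=4$ separately. The $j=2$ subtlety you flag is automatic, since the $j=1$ equation already involves $b_0^*=0$, so the differencing is valid from $j=2$ onward.
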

\arxivonly{
    \begin{proof}
        By the definition of the extremal bidding sequence, the $\gamma$-feasibility condition holds with equality for every $j\geq 1$.
    This allows us to derive a linear recurrence relation for the bids, which we then solve using its characteristic polynomial.

    For \(j\ge 1\), let
    \(
    S_j:=\sum_{i=1}^j b_i^*.
    \)
    By the definition of the extremal bidding sequence,
    \(S_j=\gamma(b_{j-1}^*+1)\) for every $j\ge 1$.
    For \(j\ge 2\), using \(S_j=S_{j-1}+b_j^*\),
    we obtain
    \(
    \gamma(b_{j-1}^*+1)
    =
    \gamma(b_{j-2}^*+1)+b_j^*.
    \)
    Rearranging gives the recurrence
    \(b_j^* = \gamma (b_{j-1}^* - b_{j-2}^*)\) for $j \ge 2$,
    with initial conditions
    \(b_0^*=0\)
    and
    \(b_1^*=\gamma.\)
    
    We first consider the case \(\gamma>4\). The characteristic polynomial of the recurrence is
    \(
    x^2-\gamma x+\gamma=0,
    \)
    whose two distinct roots are
    \[
    \frac{\gamma\pm\sqrt{\gamma(\gamma-4)}}{2}.
    \]
    Let
    \(
    s:=\sqrt{\gamma(\gamma-4)}.
    \)
    The general solution is therefore
    \[
    b_j^*
    =
    A\left(\frac{\gamma+s}{2}\right)^j
    +
    B\left(\frac{\gamma-s}{2}\right)^j.
    \]
    The condition \(b_0^*=0\) gives \(A+B=0\), and hence \(B=-A\). Using \(b_1^*=\gamma\), we obtain
    \(
    A\left(
    \frac{\gamma+s}{2}
    -
    \frac{\gamma-s}{2}
    \right)
    =
    As
    =
    \gamma,
    \)
    so
    \(A=\frac{\gamma}{s}\)
    and
    \(B=-\frac{\gamma}{s}.\)
    Thus, for every \(j\ge 1\),
    \(
    b_j^*
    =
    \frac{\gamma}{s}
    \left[
    \left(\frac{\gamma+s}{2}\right)^j
    -
    \left(\frac{\gamma-s}{2}\right)^j
    \right].
    \)
    
    It remains to consider the case \(\gamma=4\). The recurrence becomes
    \(
    b_j^*-4b_{j-1}^*+4b_{j-2}^*=0,
    \)
    whose characteristic polynomial
    \(
    x^2-4x+4=(x-2)^2
    \)
    has a repeated root at \(2\). Hence
    \(
    b_j^*=(A+Bj)2^j.
    \)
    The condition \(b_0^*=0\) gives \(A=0\), while \(b_1^*=4\) gives
    \(
    2B=4,
    \)
    and hence \(B=2\). Therefore
    \(
    b_j^*=j2^{j+1}
    \)
    for every \(j\ge 1\), as claimed.
    \end{proof}
}

The following two lemmas establish an asymptotic upper bound on the ratio between consecutive bids of any $\gamma$-feasible bidding sequence.
The first lemma shows that the asymptotic growth rate of any such sequence is bounded above by that of the extremal bidding sequence, while the second lemma establishes that the \emph{growth rate of the extremal sequence} converges to $G_\gamma :=    \frac{\gamma+\sqrt{\gamma(\gamma-4)}}{2}$, which depends only on $\gamma$.
Together, these results imply that
\[
\limsup_{i\to\infty}\frac{b_{i+1}}{b_i}
\le G_\gamma.
\]

\begin{lemma}[Asymptotic growth comparison] \label{lm:growth_rate_ub}
    For any \(\gamma\)-feasible bidding sequence \((b_i)_{i\ge 0}\),
    \[
    \limsup_{i\to\infty}\frac{b_{i+1}}{b_i}
    \le
    \limsup_{i\to\infty}\frac{b_{i+1}^*}{b_i^*}.
    \]
\end{lemma}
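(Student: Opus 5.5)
The plan is to turn $\gamma$-feasibility into a one-dimensional recursion for the normalized partial sums and compare it with the extremal sequence, which makes every inequality tight. Write $S_j:=\sum_{i=1}^j b_i$, $u_j:=S_j/b_j\ge 1$ and $r_j:=b_{j+1}/b_j>1$. Feasibility at index $j+1$ gives $S_j+b_{j+1}\le \gamma(b_j+1)$. Dividing by $b_j$ yields
\[
r_j\le \gamma-u_j+\varepsilon_j,\qquad \varepsilon_j:=\frac{\gamma}{b_j}.
\]
Since the bids are strictly increasing integers, $b_j\to\infty$ and so $\varepsilon_j\to 0$. We also have the exact identity $u_{j+1}=1+S_j/b_{j+1}=1+u_j/r_j$, which combined with the previous bound gives
\[
u_{j+1}\ \ge\ 1+\frac{u_j}{\gamma-u_j+\varepsilon_j}.
\]
This step is valid whenever the denominator is positive, and it is, because $r_j>1$ forces $\gamma-u_j+\varepsilon_j>1$.

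Next I study the map $g(u)=1+u/(\gamma-u)$ on $[1,\gamma)$. It is increasing. Its fixed points solve $(u-1)(\gamma-u)=u$, that is $u^2-\gamma u+\gamma=0$, so they are $u_\pm=(\gamma\pm s)/2$ with $s=\sqrt{\gamma(\gamma-4)}$. Moreover $g(u)>u$ on $[1,u_-)$, because $g(1)>1$ and the sign of $g(u)-u$ is that of $-(u^2-\gamma u+\gamma)$.

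For the extremal sequence all inequalities are equalities with $\varepsilon_j$ included. Its $u_j^*$ starts at $1$ and, since $\varepsilon_j^*\to0$, converges to $u_-$; this can also be read off directly from Lemma~\ref{lm:closed_form}. Hence $r_j^*=\gamma-u_j^*+\varepsilon_j^*\to \gamma-u_-=(\gamma+s)/2=G_\gamma$, so the right-hand side of the lemma equals $G_\gamma$. The case $\gamma=4$ is analogous with $u_\pm=2$, and there the closed form gives $b_{j+1}^*/b_j^*\to 2=G_4$ directly.

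For a general feasible sequence I will show $\liminf_j u_j\ge u_-$. Then $\limsup_j r_j\le \gamma-\liminf_j u_j\le \gamma-u_-=G_\gamma$, which proves the lemma. Fix $\eta>0$. For all large $j$ the perturbed map $g_\varepsilon(u)=1+u/(\gamma-u+\varepsilon)$ satisfies $g_\varepsilon\ge g_{\eta}$, since $g_\varepsilon$ is decreasing in $\varepsilon$. The map $g_\eta$ is increasing, and its smaller fixed point $u_-(\eta)$ tends to $u_-$ as $\eta\to0$. It satisfies $g_\eta(u)>u$ below $u_-(\eta)$, and this gap is bounded away from zero on any interval $[1,u_-(\eta)-\theta]$ with $\theta>0$. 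Therefore from some index on, $u_{j+1}\ge g_\eta(u_j)$. Two facts follow:
\begin{itemize}
\item if $u_j\le u_-(\eta)-\theta$, then $u_j$ increases by at least a fixed positive amount at the next step, so it cannot remain that small forever;
\item once $u_j\ge u_-(\eta)-\theta$, monotonicity of $g_\eta$ keeps $u_{j+1}\ge g_\eta(u_-(\eta)-\theta)>u_-(\eta)-\theta$.
\end{itemize}
Hence $\liminf_j u_j\ge u_-(\eta)-\theta$ for every $\theta>0$. Letting $\theta\to0$ and then $\eta\to0$ gives $\liminf_j u_j\ge u_-$.

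The main obstacle is this perturbation and invariance argument, in particular keeping the recursion in the domain where $g$ is defined and monotone, so that the iterates are trapped above the smaller fixed point rather than escaping towards $u_+$. The constraint $r_j>1$, i.e. $u_j<\gamma-1+\varepsilon_j$, supplies the needed upper bound on $u_j$, so the monotone comparison is legitimate.
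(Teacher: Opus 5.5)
Your argument is correct, but it takes a different route from the paper.

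\textbf{How the paper's route differs.} The paper uses the same one-step map: its $f(x)=1+x/(\gamma-x)$ is exactly your $g$. It applies the map to the shifted quantity $R_j:=(S_j-\gamma)/b_j=u_j-\varepsilon_j$ rather than to $u_j$. This absorbs the $+1$ of the feasibility condition exactly: $r_j\le\gamma-R_j$ and $R_{j+1}=1+R_j/r_j\ge f(R_j)$ whenever $R_j\ge 0$. For the extremal sequence these hold with equality and $R_1^*=0$, so there is no perturbation to control. Since $b_i\ge 1$ forces $R_j\ge 0$ for $j\ge\gamma$, a plain monotone induction gives $R_{j+\lceil\gamma\rceil}\ge R_j^*$ for all $j\ge 1$. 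This yields the index-by-index bound $b_{j+1+\lceil\gamma\rceil}/b_{j+\lceil\gamma\rceil}\le b_{j+1}^*/b_j^*$, and the lemma follows by taking $\limsup$. No fixed points, no closed form, and no double limit in $\theta,\eta$ are needed.

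\textbf{What each approach buys.} Your normalization leaves the non-autonomous error $\varepsilon_j=\gamma/b_j$. That is why you need $b_j\to\infty$ and the trapping argument around the attracting fixed point $u_-$, with $u_j<\gamma-1+\varepsilon_j$ keeping you in the monotone region. In exchange, your proof directly yields $\limsup_j r_j\le G_\gamma$, i.e.\ the combination of this lemma with Lemma~\ref{lm:growth_rate} that is actually used later. It also makes transparent why the smaller root $u_-$, rather than $u_+$, governs the growth, and the same argument tolerates any asymptotically vanishing perturbation of the recursion. The paper's comparison, by contrast, is shorter and non-asymptotic.

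\textbf{A small slip.} Since $g(u)=\gamma/(\gamma-u)$, you have $g(u)-u=(u^2-\gamma u+\gamma)/(\gamma-u)$. So for $u<\gamma$ the sign of $g(u)-u$ is that of $u^2-\gamma u+\gamma$, not its negative. Your conclusion $g(u)>u$ on $[1,u_-)$, and the analogous statement for $g_\eta$ below $u_-(\eta)$, is nonetheless correct.
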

\begin{proof}
    Since shifting the indices by a fixed constant $\lceil\gamma\rceil$
    does not affect the limit superior, it suffices
    to prove that
    \[
    \frac{b_{j+1+\lceil\gamma\rceil}}
         {b_{j+\lceil\gamma\rceil}}
    \le
    \frac{b_{j+1}^*}{b_j^*}
    \qquad \text{for all } j\ge 1.
    \]
    Then, by taking the limit superior on both sides,
    we conclude the lemma.

    Since $(b_i)_{i \ge 0}$ is $\gamma$-feasible,
    for every $j\ge 1$,
    \(
    \sum_{i=1}^{j+1}b_i
    \le \gamma(b_j+1).
    \)
    Rearranging and dividing by $b_j>0$ gives
    \(
    \frac{b_{j+1}}{b_j}
    \le
    \gamma-
    \frac{\sum_{i=1}^{j}b_i-\gamma}{b_j}.
    \)
    Similarly, the definition of the extremal bidding sequence gives
    \(
    \frac{b_{j+1}^*}{b_j^*}
    =
    \gamma-
    \frac{\sum_{i=1}^{j}b_i^*-\gamma}{b_j^*}.
    \)
    Define
    \[
    R_j:=
    \frac{\sum_{i=1}^{j}b_i-\gamma}{b_j}
    \quad
    \text{and}
    \quad
    R_j^*:=
    \frac{\sum_{i=1}^{j}b_i^*-\gamma}{b_j^*}.
    \]
    Then
    \begin{equation}
        \frac{b_{j+1}}{b_j}\le\gamma-R_j
        \quad
        \text{and}
        \quad
        \frac{b_{j+1}^*}{b_j^*}=\gamma-R_j^*.
        \label{eq:ratio-bounds}
    \end{equation}

    Thus, it suffices to establish
    \[
        R_{j+\lceil\gamma\rceil}\ge R_j^*
        \qquad \text{for all } j\ge 1.
    \]
    Once this claim is established, \eqref{eq:ratio-bounds} implies that, for every $j\ge 1$,
    \(
    \frac{b_{j+1+\lceil\gamma\rceil}}
    {b_{j+\lceil\gamma\rceil}}
    \le \gamma-R_{j+\lceil\gamma\rceil}
    \le \gamma-R_j^*
    =\frac{b_{j+1}^*}{b_j^*}.
    \)
    We now prove this claim by induction on $j$.
    
    For $j=1$,
    since $b_1^*=\gamma$, we have
    \(
    R_1^*
    =
    \frac{b_1^*-\gamma}{b_1^*}
    =0.
    \)
    On the other hand, since $b_i\ge 1$,
    \[
    R_{1+\lceil\gamma\rceil}
    =
    \frac{\sum_{i=1}^{1+\lceil\gamma\rceil}b_i
    -\gamma}
    {b_{1+\lceil\gamma\rceil}}
    \ge
    \frac{1+\lceil\gamma\rceil-\gamma}
    {b_{1+\lceil\gamma\rceil}}
    \ge 0
    =R_1^*.
    \]
    Hence the claim holds for $j=1$.
    
    For $j\ge 2$, assume that
    \(
    R_{j-1+\lceil\gamma\rceil}\ge R_{j-1}^*.
    \)
    We show that
    \(
    R_{j+\lceil\gamma\rceil}\ge R_j^*.
    \)
    
    First, 
    the condition $b_i\ge 1$ implies
    \(
    \sum_{i=1}^{j+\lceil\gamma\rceil}b_i\ge j+\lceil\gamma\rceil\ge\gamma,
    \)
    and hence $R_{j+\lceil\gamma\rceil} = \frac{\sum_{i=1}^{j+\lceil\gamma\rceil}b_i-\gamma}{b_{j+\lceil\gamma\rceil}} \ge 0$.

    We next derive recursive relations for $R_{j+\lceil\gamma\rceil}$ and $R^*_j$.
    From the definition of $R_{j+\lceil\gamma\rceil}$,
    \[
    \begin{aligned}
    R_{j+\lceil\gamma\rceil}
    &=
    \frac{\sum_{i=1}^{j+\lceil\gamma\rceil}b_i-\gamma}{b_{j+\lceil\gamma\rceil}}
    =
    1+\frac{\sum_{i=1}^{j-1+\lceil\gamma\rceil}b_i-\gamma}{b_{j+\lceil\gamma\rceil}}
    =
    1+\frac{R_{j-1+\lceil\gamma\rceil}}{b_{j+\lceil\gamma\rceil}/b_{j-1+\lceil\gamma\rceil}}
    \ge
    1+\frac{R_{j-1+\lceil\gamma\rceil}}{\gamma-R_{j-1+\lceil\gamma\rceil}},
    \end{aligned}
    \]
    where the last inequality follows from
    $R_{j-1+\lceil\gamma\rceil}\ge 0$ and
    \eqref{eq:ratio-bounds}, which is
    \(
    0<\frac{b_{k+1}}{b_k}\le\gamma-R_k.
    \)
    Similarly, 
    from the definition of $R^*_j$,
    \[
    \begin{aligned}
    R_j^*
    &=
    \frac{\sum_{i=1}^j b_i^*-\gamma}
    {b_j^*}
    =
    1+\frac{\sum_{i=1}^{j-1}b_i^*-\gamma}
    {b_j^*}
    =
    1+\frac{R_{j-1}^*}{b_j^*/b_{j-1}^*}
    =
    1+\frac{R_{j-1}^*}{\gamma-R_{j-1}^*}.
    \end{aligned}
    \]
    
    Now consider the function
    \(
    f(x)=1+\frac{x}{\gamma-x},
    \)
    for $x<\gamma$.
    The function $f$ is strictly increasing
    on $(-\infty,\gamma)$.
    By the induction hypothesis and the monotonicity of $f$, we have
    \(
    f(R_{j-1+\lceil\gamma\rceil}) \ge f(R_{j-1}^*).
    \)
    Therefore,
    \[
    R_{j+\lceil\gamma\rceil}
    \ge
    1+\frac{R_{j-1+\lceil\gamma\rceil}}
    {\gamma-R_{j-1+\lceil\gamma\rceil}}
    \ge
    1+\frac{R_{j-1}^*}{\gamma-R_{j-1}^*}
    =R_j^*.
    \]
    This completes the induction.
\end{proof}

\begin{lemma}[Growth rate of the extremal bidding sequence] \label{lm:growth_rate}
    For \(\gamma\ge 4\),
    \[
    \lim_{i\to\infty}\frac{b_{i+1}^*}{b_i^*}
    =
    G_\gamma,
    \quad\text{where }
    G_\gamma
    :=
    \frac{\gamma+\sqrt{\gamma(\gamma-4)}}{2}.
    \]
\end{lemma}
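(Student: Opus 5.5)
We will use the closed form of Lemma~\ref{lm:closed_form} directly and split into the two cases $\gamma>4$ and $\gamma=4$, exactly as in that lemma.

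\textbf{Case $\gamma>4$.} Write $r_+=\frac{\gamma+s}{2}$ and $r_-=\frac{\gamma-s}{2}$ with $s=\sqrt{\gamma(\gamma-4)}>0$, so that $r_+=G_\gamma$. The two roots satisfy $r_+r_-=\gamma>0$ and $r_++r_-=\gamma>0$. Hence $0<r_-<r_+$, and we set $\rho:=r_-/r_+\in(0,1)$.

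Factoring $r_+^j$ out of the closed form, the ratio becomes
\[
\frac{b_{i+1}^*}{b_i^*}
= r_+\cdot\frac{1-\rho^{i+1}}{1-\rho^{i}},
\]
which is well defined for $i\ge1$ because $1-\rho^i>0$. Since $\rho^i\to0$, the fraction tends to $1$, and the limit is $r_+=G_\gamma$.

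\textbf{Case $\gamma=4$.} Here $b_j^*=j2^{j+1}$, so
\[
\frac{b_{i+1}^*}{b_i^*}=2\cdot\frac{i+1}{i}\to 2.
\]
This equals $G_4=\frac{4+0}{2}=2$, as required.

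The only point needing care is the sign and size of $r_-$. We need $|r_-|<r_+$ so that the subdominant term vanishes relative to the dominant one, and we need $b_i^*\neq0$ so that the ratio is defined. Both follow from the positivity of the sum and product of the roots, which is where $\gamma>4$ (giving distinct real roots) is used. The rest is routine.
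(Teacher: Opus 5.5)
Your proposal is correct and follows essentially the same route as the paper: it uses the closed form from Lemma~\ref{lm:closed_form}, factors out the dominant root $\frac{\gamma+s}{2}$ for $\gamma>4$ so that $\left(\frac{\gamma-s}{\gamma+s}\right)^i\to 0$, and handles $\gamma=4$ via $2\frac{i+1}{i}\to 2$. The only cosmetic difference is how you show $0<r_-<r_+$: you use Vieta's relations $r_++r_-=r_+r_-=\gamma$, whereas the paper argues directly from $0<s<\gamma$.
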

\begin{proof}
    We consider the two cases \(\gamma=4\) and \(\gamma>4\) separately.
    
    First suppose that \(\gamma=4\). By Lemma~\ref{lm:closed_form},
    \(
    b_j^*=j2^{j+1}.
    \)
    Hence
    \[
    \frac{b_{j+1}^*}{b_j^*}
    =
    \frac{(j+1)2^{j+2}}{j2^{j+1}}
    =
    2\frac{j+1}{j}.
    \]
    Therefore,
    \(
    \lim_{j\to\infty}\frac{b_{j+1}^*}{b_j^*}
    =
    2.
    \)
    Since
    \(
    G_\gamma
    =
    \frac{\gamma+\sqrt{\gamma(\gamma-4)}}{2},
    \)
    we have \(G_4=2\), and thus the claim follows in this case.
    
    Now suppose that \(\gamma>4\), and let
    \(
    s:=\sqrt{\gamma(\gamma-4)}.
    \)
    By Lemma~\ref{lm:closed_form},
    \(
    b_j^*
    =
    \frac{\gamma}{s}
    \left[
    \left(\frac{\gamma+s}{2}\right)^j
    -
    \left(\frac{\gamma-s}{2}\right)^j
    \right].
    \)
    Factoring out the dominant term gives
    \[
    b_j^*
    =
    \frac{\gamma}{s}
    \left(\frac{\gamma+s}{2}\right)^j
    \left[
    1-
    \left(\frac{\gamma-s}{\gamma+s}\right)^j
    \right].
    \]
    Since \(\gamma>4\), we have \(s>0\) and \(s<\gamma\), and therefore
    \(
    0<
    \frac{\gamma-s}{\gamma+s}
    <1.
    \)
    It follows that
    \[
    \left(\frac{\gamma-s}{\gamma+s}\right)^j
    \to 0,
    \qquad\text{as }j\to\infty.
    \]
    
    Consequently,
    \begin{align*}
    \frac{b_{j+1}^*}{b_j^*}
    &=
    \frac{
    \frac{\gamma}{s}
    \left(\frac{\gamma+s}{2}\right)^{j+1}
    \left[
    1-
    \left(\frac{\gamma-s}{\gamma+s}\right)^{j+1}
    \right]
    }{
    \frac{\gamma}{s}
    \left(\frac{\gamma+s}{2}\right)^j
    \left[
    1-
    \left(\frac{\gamma-s}{\gamma+s}\right)^j
    \right]
    }
    =
    \frac{\gamma+s}{2}
    \cdot
    \frac{
    1-
    \left(\frac{\gamma-s}{\gamma+s}\right)^{j+1}
    }{
    1-
    \left(\frac{\gamma-s}{\gamma+s}\right)^j
    }.
    \end{align*}
    Taking \(j\to\infty\), the second factor converges to \(1\), and hence
    \(
    \lim_{j\to\infty}\frac{b_{j+1}^*}{b_j^*}
    =
    \frac{\gamma+s}{2}
    =
    \frac{\gamma+\sqrt{\gamma(\gamma-4)}}{2}
    =
    G_\gamma.
    \)
    This completes the proof.
\end{proof}

\subsection{Adapting the primal-dual approach to tail constraints} \label{sec:primal_dual}

In this section, we incorporate properties of the extremal bidding sequence into the primal-dual approach proposed by Chrobak et al.~\cite{chrobak2008incremental}.

In the previous section, we show that, for any $\gamma$-feasible bidding sequence, the ratio of two consecutive bids is at most $G_\gamma = \frac{\gamma+\sqrt{\gamma(\gamma-4)}}{2}$, provided that the bids are sufficiently large.
We incorporate this property into the linear program mentioned in Section~\ref{sec:previous_primal_dual}.
More precisely, we relax the second constraint by considering only bids whose values are at least some sufficiently large value $N$, and impose an upper bound of $G_\gamma$ on the ratio of consecutive bids.
We enforce this bound by reducing the upper limit of the summation index $b$.
This yields the primal and dual program in Figure~\ref{fig:primal_dual}.

\begin{figure}[t]
    \centering
    \fbox{%
    \begin{minipage}[t]{0.97\textwidth}
        \footnotesize
        \textbf{Primal}
        \[
            \begin{aligned}
            \text{min}\quad
            & \beta \\
            \text{s.t.}\quad
            & \beta
            - \sum_{t=1}^T \sum_{b=1}^n
              \frac{b}{T} x(t,b)
            \ge 0,
            && \forall T \in [n], \\
            & \sum_{t=N+1}^T
            \sum_{b=T}^{\lfloor G_\gamma (t-1) \rfloor}
            x(t,b)
            \ge 1,
            && \forall T \in [n] \setminus [N], \\
            & x(t,b) \ge 0,
            && \forall t,b \in [n], \\
            & \beta \ge 0.
            \end{aligned}
        \]
    \end{minipage}
    }
    \fbox{%
    \begin{minipage}[t]{0.97\textwidth}
        \footnotesize
        \textbf{Dual (after normalization)}
        \[
            \begin{aligned}
            \text{max}\quad
            & \frac{\sum_{T=1}^n \mu(T)}{\sum_{T'=1}^n \pi(T')} \\
            \text{s.t.}\quad
            & \mathbf{1}[N+1 \le t \; \land \; b \le \lfloor G_\gamma(t-1) \rfloor]
              \sum_{T=t}^b \mu(T)
              -
              \sum_{T=t}^n \frac{b}{T}\pi(T)
              \le 0,
            && \forall t,b \in [n], \\
            & \pi(T) \ge 0,
            && \forall T \in [n], \\
            & \mu(T) \ge 0,
            && \forall T \in [n].
            \end{aligned}
        \]
    \end{minipage}
    }
    \caption{Primal and dual programs for the online bidding problem with tail constraint $(\gamma, 0)$}
    \label{fig:primal_dual}
\end{figure}


The construction of the variables is identical to that of Chrobak et al.~\cite{chrobak2008incremental}.
Thus, the objective value remains unchanged.
The analysis of constraint satisfaction is also unchanged, except for the case $N < t \leq N^2$.
In this case, the analysis of the left-hand side remains unchanged,
\(
    \sum_{T=t}^n \frac{1}{T}\pi(T)
    \geq_{N \to \infty}
    \frac{1}{t}.
\)
The relevant term on the right-hand side can also be upper-bounded in the same way,
\(
    \sum_{T=t}^b \frac{1}{b} \mu(T)
    \leq \frac{\alpha}{b} \ln \frac{b}{t - 1}.
\)
However, the maximum value of the function
\(
    \frac{1}{b} \ln \frac{b}{t - 1}
\)
is different.
Let $z$ be defined by $b = z(t - 1)$.
As mentioned in Section~\ref{sec:previous_primal_dual}, the function is maximized at $z = e$ and is increasing on $z \in (1, e)$.
However,
\(
    b \le \lfloor G_\gamma(t-1) \rfloor \leq G_\gamma(t-1),
\)
which implies that $z \leq G_\gamma$.
Therefore, if $G_\gamma > e$, the maximum is attained at $z=e$; otherwise, it is attained at $z=G_\gamma$.
Consequently, the maximum value of the function is
\(
    \frac{1}{e(t-1)}
\)
if $G_\gamma > e$, and
\(
    \frac{\ln G_\gamma}{G_\gamma(t-1)}
\)
otherwise.
Moreover, $\frac{1}{t-1}$ is asymptotically equivalent to $\frac{1}{t}$ as $t \to \infty$, which is ensured by taking $N \to \infty$.

To ensure that
\(
    \frac{\alpha}{b} \ln \frac{b}{t - 1}
    \leq
    \frac{1}{t},
\)
we choose $\alpha=e$ if $G_\gamma > e$, and
\(
    \alpha=\frac{G_\gamma}{\ln G_\gamma}
\)
otherwise.
Therefore, the constraint is satisfied, and the constructed solution is feasible.

\begin{lemma}[Lower bound for tail constraints with \(\delta_i = 0\)] \label{lm:delta_0_lb}
    For a tail constraint \((\gamma_i, 0)\),
    the expected competitive ratio of the online bidding problem is at least
    \(
        \frac{q_i^*}{\ln q_i^*},
    \)
    where
    \[
    q_i^* =
    \begin{cases}
        \frac{\gamma_i + \sqrt{\gamma_i (\gamma_i - 4)}}{2},
        & \text{if } 4 \le \gamma_i \le \frac{e^2}{e-1}, \\
        e,
        & \text{if } \gamma_i > \frac{e^2}{e-1}.
    \end{cases}
    \]
    For a collection of tail constraints \(\{(\gamma_i, 0)\}_{i=1}^m\),
    the expected competitive ratio is at least
    \(
        \frac{q^*}{\ln q^*},
    \)
    where
    \(
        q^*=\min_{i\in[m]} q_i^*.
    \)
\end{lemma}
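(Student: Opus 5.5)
The proof assembles the pieces already in place: the structural growth bound from Section~\ref{sec:ext_seq} and the dual construction just described in Section~\ref{sec:primal_dual}. Fix a randomized algorithm $\A$ satisfying $(\gamma_i,0)$.

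\textbf{Step 1: every realization has bounded consecutive-bid ratios.} By the feasibility lemma, every deterministic realization of $\A$ is $\gamma_i$-feasible. By Lemmas~\ref{lm:growth_rate_ub} and~\ref{lm:growth_rate}, each such realization satisfies $\limsup_j b_{j+1}/b_j \le G_{\gamma_i}$.

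The dual program needs the slightly stronger statement that consecutive bids $t-1 < b$ with $t > N$ obey $b \le G_{\gamma_i}(t-1)$ up to a $(1+\epsilon)$ slack. I would obtain this with a limiting argument.
\begin{itemize}
\item Run the primal-dual argument with $G_{\gamma_i}(1+\epsilon)$ in place of $G_{\gamma_i}$.
\item Note that the induction in Lemma~\ref{lm:growth_rate_ub} gives the pointwise bound $b_{j+1+\lceil\gamma\rceil}/b_{j+\lceil\gamma\rceil} \le b^*_{j+1}/b^*_j$ for all $j$.
\item Since $b^*_{j+1}/b^*_j \to G_{\gamma_i}$, this ratio is below $G_{\gamma_i}(1+\epsilon)$ beyond an index that depends only on $\gamma_i$ and $\epsilon$.
\item Bids grow at least linearly in the index, so they are uniformly bounded below that index. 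Taking $N$ larger than this bound, every consecutive pair with $t-1\ge N$ satisfies the ratio bound.
\end{itemize}
Hence $x(t,b)=0$ whenever $t > N$ and $b > G_{\gamma_i}(1+\epsilon)(t-1)$. Since every bid sequence reaches every $T$ with probability one, the covering constraint of the primal in Figure~\ref{fig:primal_dual} holds for $T>N$. So $\A$ induces a feasible primal solution whose value is at most its expected competitive ratio.

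\textbf{Step 2: weak duality.} The dual solution of Chrobak et al.\ with the modified $\alpha$ is feasible, as verified above. Weak duality then gives a lower bound of $\alpha\,(H_{N^2}-H_{N-1})/(H_{N^2\log N}-H_{N-1}) \to \alpha$ as $N\to\infty$. Let $g := G_{\gamma_i}(1+\epsilon)$. Then
\[
\alpha = \frac{\min(g,e)}{\ln \min(g,e)},
\]
and letting $\epsilon\to 0$ gives $\min(G_{\gamma_i},e)/\ln\min(G_{\gamma_i},e)$. This step uses that $z/\ln z$ is continuous.

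\textbf{Step 3: identify $q_i^*$.} It remains to check that $\min(G_{\gamma_i},e)=q_i^*$.
\begin{itemize}
\item $G_\gamma$ is increasing in $\gamma$. Solving $G_\gamma=e$ gives $\gamma(\gamma-4)=(2e-\gamma)^2$, so $\gamma = e^2/(e-1)$.
\item Hence $G_{\gamma_i}\le e$ exactly when $\gamma_i\le e^2/(e-1)$, which matches the two cases in the statement.
\end{itemize}

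\textbf{Step 4: multiple constraints.} A feasible algorithm satisfies each $(\gamma_i,0)$, so each lower bound applies separately. The best of them is the maximum of $q_i^*/\ln q_i^*$ over $i$. Because $z/\ln z$ is decreasing on $(1,e]$ and all $q_i^*\le e$, this maximum is attained at $q^*=\min_i q_i^*$.

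\textbf{Main obstacle.} I expect the delicate point to be Step 1. Lemma~\ref{lm:growth_rate_ub} is only a $\limsup$ statement, while the LP needs a bound uniform over all realizations in the support once bids exceed $N$. The pointwise shifted inequality proved inside that lemma, combined with the $(1+\epsilon)$ slack, is what I would use to resolve it.
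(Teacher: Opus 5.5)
Your route is the paper's. The feasibility lemma and Lemmas~\ref{lm:growth_rate_ub} and~\ref{lm:growth_rate} justify the restricted primal of Figure~\ref{fig:primal_dual}. The Chrobak et al.\ dual with $\alpha=\min(G_{\gamma_i},e)/\ln\min(G_{\gamma_i},e)$ then gives the bound by weak duality, and $G_\gamma\le e$ iff $\gamma\le e^2/(e-1)$. Your $(1+\epsilon)$ slack repairs an imprecision in the paper. The paper imposes $b\le\lfloor G_\gamma(t-1)\rfloor$ exactly, yet $b^*_{j+1}/b^*_j>G_\gamma$ for every $j$ (it equals $2(j+1)/j$ when $\gamma=4$). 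Your Step~4 also gives the correct lower-bound reason for taking $\min_i q_i^*$. However, two points in Step~1 fail as written. First, ``bids grow at least linearly'' is the wrong direction. To conclude that every bid $\ge N$ has index beyond the cutoff $K$, you need a uniform \emph{upper} bound on $b_k$ for $k<K$. This comes from feasibility itself: $b_k\le\gamma_i(b_{k-1}+1)$ gives $b_k\le\gamma_i(\gamma_i^k-1)/(\gamma_i-1)$.

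Second, and more substantively, the covering constraint of Figure~\ref{fig:primal_dual} does not hold for every $T>N$. Its left-hand side counts only straddling pairs $(t-1,b)$ with $t-1\ge N$. For $T$ slightly above $N$, the bid just before the first bid $\ge T$ can lie below $N$. For example, the doubling sequence $1,2,4,\dots$ is $4$-feasible, and if $2^{k+1}<N<T\le 2^{k+2}$ the left-hand side is $0$. The paper's Figure~\ref{fig:primal_dual} has the same defect. To repair it, impose the covering constraints only for $T\ge M:=\lfloor\gamma_i N\rfloor+1$. Any pair with $t-1\le N-1$ has $b\le\gamma_i N<M$ by feasibility, so for $T\ge M$ the straddling pair has $t\ge N+1$. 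Then start the supports of $\pi$ and $\mu$ at $M$ instead of $N$, with $n\ge M^2\log M$. The dual constraints with $N<t<M$ still hold. There the right-hand side is at least $(1-o(1))\,b/M$ and the left-hand side is at most $\alpha\ln\frac{b}{M-1}$. Since $b\le g(t-1)\le g(M-1)$ with your $g=G_{\gamma_i}(1+\epsilon)$, we get $z=b/(M-1)\le g$, so the same maximization of $\ln z/z$ over $z\le g$ applies. With these two repairs your argument goes through.
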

\begin{proof}
    For \(\gamma_i \geq 4\), by the construction above, there exists a feasible solution with a expected competitive ratio of \(e\) if \(G_{\gamma_i} > e\), where
    \(
        G_{\gamma_i}
        =
        \frac{\gamma_i+\sqrt{\gamma_i(\gamma_i-4)}}{2}.
    \)
    Since \(G_{\gamma_i}>e\), this is equivalent to
    \(
        \gamma_i > \frac{e^2}{e-1}.
    \)
    On the other hand, if \(G_{\gamma_i}\leq e\), there exists a feasible solution with a expected competitive ratio of
    \(
        \frac{G_{\gamma_i}}{\ln G_{\gamma_i}}.
    \)
    
    For multiple tail constraints, we choose the smallest \(q_i^*\) because doing so ensures that all tail constraints are satisfied.
\end{proof}

We show a matching upper bound in the next section.
The primal-dual analysis also leads to the following observation.

\begin{observation} \label{obs:with_tail_constraint}
    For the online bidding problem with a tail constraint $(\gamma, 0)$,
    the primal-dual argument in this section suggests that
    consecutive bids in an optimal primal solution for a worst-case instance
    should differ by a factor of $\min\{G_\gamma, e\}$.
\end{observation}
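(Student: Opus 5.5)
The plan is to read the observation off the dual construction above through complementary slackness, as was done informally for the unconstrained problem in Section~\ref{sec:previous_primal_dual}. Complementary slackness says that a primal variable $x(t,b)$ can be positive in an optimal primal solution only if the corresponding dual constraint in Figure~\ref{fig:primal_dual} is tight. Here $x(t,b)$ is the probability that $t-1$ and $b$ are consecutive bids. So it suffices to identify, for each $t$ with $N<t\le N^2$, the pairs $(t,b)$ at which the constraint of our dual solution is (asymptotically) tight, and then express the corresponding $b$ as a multiple of $t-1$.

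\textbf{Step 1: reduce to a one-variable function.} Fix $t$ with $N<t\le N^2$ and divide the dual constraint by $b$, as before. The left-hand side $\sum_{T\ge t}\pi(T)/T$ is asymptotically $1/t$ and does not depend on $b$. The right-hand side is at most $\frac{\alpha}{b}\ln\frac{b}{t-1}$, and this bound is asymptotically achieved as $N\to\infty$ because the harmonic sums are approximated by logarithms. Substituting $b=z(t-1)$, tightness therefore occurs exactly where $h(z)=\frac{\ln z}{z}$ attains its maximum over the admissible range $1<z\le G_\gamma$. This range is admissible because of the indicator $b\le\lfloor G_\gamma(t-1)\rfloor$, which is justified by Lemmas~\ref{lm:growth_rate_ub} and~\ref{lm:growth_rate}.

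\textbf{Step 2: locate the maximizer.} The function $h$ is increasing on $(1,e)$ and decreasing on $(e,\infty)$, and it has a unique maximum at $z=e$. Hence on $(1,G_\gamma]$ the maximizer is $z^*=\min\{G_\gamma,e\}$. With $\alpha=z^*/\ln z^*$, as chosen in the proof of Lemma~\ref{lm:delta_0_lb}, the dual constraint is tight, in the limit, precisely at $b\approx z^*(t-1)$ and is strictly slack elsewhere.

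\textbf{Step 3: transfer to the primal.} By complementary slackness, an optimal primal solution should therefore place its mass on pairs with $b\approx z^*(t-1)$. In other words, consecutive bids differ by a factor of about $\min\{G_\gamma,e\}$, and the dual weights $\pi,\mu$ encode the corresponding worst-case thresholds.

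The main obstacle is rigor rather than computation. Our dual solution is only asymptotically optimal, and tightness holds only in the limit $N\to\infty$, so complementary slackness applies only in an approximate sense. This is why the statement is phrased as an observation ("suggests"). I would not try to make it an exact structural theorem. Instead, I would present the argument above as a heuristic and let Theorem~\ref{thm:special_case} confirm it: a geometric algorithm with base $q^*$ and a uniform offset achieves exactly the lower bound $q^*/\ln q^*$ of Lemma~\ref{lm:delta_0_lb}, which supports the predicted growth factor.
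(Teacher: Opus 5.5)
Your proposal is correct and follows the paper's reasoning. Both read the observation off the maximizer of $\frac{\ln z}{z}$ over the range $1<z\le G_\gamma$ that the indicator $b\le\lfloor G_\gamma(t-1)\rfloor$ enforces, obtaining $z^*=\min\{G_\gamma,e\}$ (with $\alpha=z^*/\ln z^*$ covering both of the paper's cases), and both interpret $t-1$ and $b$ as consecutive bids; your approximate complementary slackness just makes explicit the paper's informal remark that the dual ``searches for a worst-case instance,'' valid only in the limit $N\to\infty$ up to harmless edge effects such as the truncation of the $\mu$-sum when $b>N^2$.
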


\section{Tail constraints with general $\delta_i$} \label{sec:general_delta}

In this section, we derive an upper bound on the expected competitive ratio for the online bidding problem under tail constraints.
We first consider a single tail constraint $(\gamma, \delta)$ and then extend our analysis to multiple tail constraints.

Observation~\ref{obs:with_tail_constraint} from the previous section suggests that consecutive bids in a randomized online algorithm should increase by a factor of $q \le e$, even when a tail constraint is imposed.
This motivates the following periodic algorithm.

\runtitle{Periodic algorithm~$\A_q$.}
For a parameter \(q>1\), we consider the following randomized algorithm \(\A_q\). Initially, \(\A_q\) samples a random offset \(x \sim \operatorname{Uniform}[0,1)\), and then submits the sequence of bids
\[
    q^x, q^{1+x}, q^{2+x}, q^{3+x}, \ldots
\]
until one of the bids reaches or exceeds the threshold.
We will explain how to choose \(q\) later.

This family of algorithms generalizes the optimal randomized algorithm for online bidding: setting \(q=e\) recovers the optimal algorithm, i.e., \(\A_e\).
We now derive an upper bound on the competitive ratio of any realization of \(\A_q\), as well as an upper bound on the expected competitive ratio of \(\A_q\).

\begin{lemma}[Competitive ratio of deterministic realizations] \label{lm:comp_ratio}
    Every deterministic algorithm induced by $\A_q$ has competitive
    ratio at most
    \(
        \frac{q^2}{q-1}.
    \)
    Moreover, $\frac{q^2}{q-1}$ is decreasing on $q \in (1, 2)$ and increasing on $q \in (2,e)$.
\end{lemma}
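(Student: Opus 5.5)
Fix a realization of $\A_q$, that is, an offset $x\in[0,1)$, so the bids are $b_j = q^{j-1+x}$ for $j\ge 1$. The plan is to bound the ratio on an arbitrary threshold $T$ by a geometric sum, and then to settle the monotonicity claim by differentiating.

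\textbf{Bounding the ratio.} Let $k$ be the index with $b_{k-1} < T \le b_k$. The cost is a geometric sum:
\[
\cost(\A_q(T)) = \sum_{j=1}^{k} q^{j-1+x} = q^{x}\,\frac{q^{k}-1}{q-1} < \frac{q^{k-1+x}\, q}{q-1} = \frac{q\, b_k}{q-1}.
\]
I will split into two cases according to $k$.
\begin{itemize}
\item If $k\ge 2$, then $T > b_{k-1} = b_k/q$. Hence the ratio satisfies $\cost/T < \frac{q b_k}{q-1}\cdot\frac{q}{b_k} = \frac{q^2}{q-1}$.
\item If $k=1$, then $T\ge 1$ and the cost is $b_1 = q^x < q$. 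The ratio is therefore less than $q \le \frac{q^2}{q-1}$, since $q-1 \le q$.
\end{itemize}
Taking the supremum over $T\ge 1$ gives competitive ratio at most $\frac{q^2}{q-1}$. The bound is approached as $T\downarrow b_{k-1}$ with $k$ large, so it is the right constant, although only the upper bound is needed.

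\textbf{Monotonicity.} Set $h(q)=\frac{q^2}{q-1}$. Then
\[
h'(q) = \frac{2q(q-1)-q^2}{(q-1)^2} = \frac{q(q-2)}{(q-1)^2}.
\]
This is negative on $(1,2)$ and positive on $(2,\infty)$, which in particular covers $(2,e)$.

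\textbf{Main obstacle.} No step is genuinely hard. The only point requiring care is the boundary case $k=1$, where there is no previous bid to compare $T$ against; there I fall back on $T\ge 1$ together with $q^x<q$. One might worry about whether the real-valued bids fit the integer universe used earlier, but the lemma is stated for the real-valued problem, so this does not arise.
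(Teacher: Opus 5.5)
Your proof is correct and follows essentially the same route as the paper. Both split on $k=1$ versus $k\ge 2$, bound the geometric sum using $T > b_{k-1} = b_k/q$, and settle monotonicity with the derivative $h'(q) = \frac{q(q-2)}{(q-1)^2}$, which the paper only alludes to.
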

\arxivonly{
    \begin{proof}
        Fix an arbitrary realization \(x\in[0,1)\) of the random offset. The resulting deterministic algorithm submits the bids
    \[
    b_i=q^{i-1+x},\qquad i\ge 1.
    \]
    Fix a threshold \(T \ge 1\), and let \(k\) be the index of the first bid that reaches or exceeds \(T\).
    
    Suppose that \(k=1\). Then the ratio of the deterministic algorithm is
    \(
    \frac{q^x}{T}
    \le
    q^x
    <
    q
    <
    \frac{q^2}{q-1},
    \)
    where we use \(T \ge 1\), \(x<1\), and \(q>1\).
    
    Now suppose that \(k\ge 2\). Since \(k\) is the first index for which \(b_k\ge T\), we have
    \(
    T>b_{k-1}=q^{k-2+x}.
    \)
    The cost incurred by the algorithm is
    \[
    \sum_{i=1}^{k} q^{i-1+x}
    =
    q^x\sum_{i=0}^{k-1}q^i
    =
    q^x\frac{q^k-1}{q-1}.
    \]
    Hence, the ratio of the algorithm is
    \[
    \frac{q^x(q^k-1)}{(q-1)T} 
    <
    \frac{q^x(q^k-1)}
         {(q-1)q^{k-2+x}} 
    =
    \frac{q^2-q^{2-k}}{q-1}
    <
    \frac{q^2}{q-1}.
    \]
    
    The fact that \(\frac{q^2}{q-1}\) is decreasing on $q \in (1, 2)$ and increasing on $q \in (2,e)$ can be seen by differentiating the function with respect to \(q\).
    \end{proof}
}

\begin{lemma}[Expected competitive ratio of \(\A_q\)] \label{lm:exp_comp_ratio}
    The expected competitive ratio of $\A_q$ is at most
    \(
        \frac{q}{\ln q}.
    \)
    Moreover, $\frac{q}{\ln q}$ is decreasing on $q \in (1,e)$.
\end{lemma}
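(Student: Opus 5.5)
Fix a threshold \(T \ge 1\) and write \(T = q^{m+y}\) with \(m \ge 0\) an integer and \(y \in [0,1)\). Everything then reduces to a one-dimensional integral over the offset \(x\sim\operatorname{Uniform}[0,1)\). The first bid that reaches or exceeds \(T\) is \(q^{k-1+x}\), where \(k-1\) is the least integer \(j\) with \(j+x \ge m+y\). Hence \(k-1=m\) when \(x \ge y\), and \(k-1=m+1\) when \(x<y\). In either case the last bid is \(q^{m+y+u}\), where \(u\in[0,1)\) is the fractional gap \(u=x-y\) (if \(x\ge y\)) or \(u=1+x-y\) (if \(x<y\)). Since \(x\) is uniform on \([0,1)\), the map \(x\mapsto u\) is a measure-preserving rotation of the circle, so \(u\) is also uniform on \([0,1)\).

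Next I bound the cost by its geometric-series form. The cost is \(\sum_{i=1}^{k} q^{i-1+x}\), which is the last bid times \(\sum_{j=0}^{k-1} q^{-j}\), and this is at most \(q^{m+y+u}\cdot\frac{q}{q-1}\). Dividing by \(T=q^{m+y}\) gives
\[
\frac{\cost(\A_q(T))}{T}\le \frac{q}{q-1}\,q^{u}.
\]
Taking expectations with \(u\) uniform,
\[
\E\left[\frac{\cost(\A_q(T))}{T}\right]\le \frac{q}{q-1}\int_0^1 q^u\,du=\frac{q}{q-1}\cdot\frac{q-1}{\ln q}=\frac{q}{\ln q}.
\]
The bound does not depend on \(T\), so it also bounds the supremum over \(T\), which gives the first claim. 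The edge case \(m=0\) with \(x<y\) needs no special treatment: the first bid \(q^x<T\) then does not reach \(T\), and the bid sequence is infinite, so the index bookkeeping above is unchanged.

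For monotonicity, I differentiate: \(\frac{d}{dq}\frac{q}{\ln q}=\frac{\ln q-1}{(\ln q)^2}\). This is negative exactly when \(\ln q<1\), that is, on \((1,e)\), so \(\frac{q}{\ln q}\) is decreasing there.

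The only real obstacle is verifying that the gap \(u\) is uniform, i.e.\ carrying out the two-case split on \(x\ge y\) versus \(x<y\) carefully. The rest is routine: the cost bound simply drops the negative term of the finite geometric series, as in Lemma~\ref{lm:comp_ratio}.
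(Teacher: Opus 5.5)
Your proof is correct and follows essentially the same route as the paper. You bound the cost by the terminal bid times $\frac{q}{q-1}$, note that the base-$q$ logarithmic overshoot (your $u$, the paper's $\varepsilon=\{x-t\}$ with $t=\log_q T$) is uniform on $[0,1)$ by translation invariance modulo $1$, integrate $q^u$, and get monotonicity from the same derivative. Your explicit split into the cases $x\ge y$ and $x<y$ is just an unpacked version of the paper's identity $k-1=\lceil t-x\rceil$.
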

\begin{proof}
    For a fixed threshold \(T\ge 1\), let \(k\) be the smallest positive
    integer such that the \(k\)-th bid reaches or exceeds \(T\).
    The expected ratio of $\A_q$ on $T$ is
    \[
    \E\!\left[\frac{\sum_{i=1}^{k}q^{i-1+x}}{T}\right]
    =
    \E\!\left[
    \frac{q^k-1}{q-1}\frac{q^x}{T}
    \right] 
    \le
    \E\!\left[
    \frac{q^{k+x}}{q-1}\frac{1}{T}
    \right].
    \]
    
    Let
    \(
    t=\log_q T,
    \)
    so that \(T=q^t\).
    Since \(k\) is the integer such that
    \(
    q^{k-1+x} \ge T > q^{k-2+x},
    \)
    we have
    \(
    k-1=\lceil t-x\rceil.
    \)
    Define
    \(
    \varepsilon=k-1+x-t,
    \)
    which measures the excess of \(q^{k-1+x}\) over \(T\) on a base-\(q\) logarithmic scale.
    Then \(0\le \varepsilon<1\), and
    \(
    T=q^t=q^{k-1+x-\varepsilon}.
    \)
    Therefore, the expected ratio is at most
    \[
    \E\!\left[
    \frac{q^{k+x}}{q-1}
    \frac{1}{q^{k-1+x-\varepsilon}}
    \right] 
    =
    \frac{q}{q-1}\,
    \E [q^\varepsilon].
    \]
    
    It remains to determine the distribution of \(\varepsilon\). Since
    \(
    k-1=\lceil t-x\rceil,
    \)
    we obtain
    \(
    \varepsilon
    =
    \lceil t - x \rceil - (t - x).
    \)
    Equivalently,
    \(
    \varepsilon=\{x-t\},
    \)
    where \(\{\cdot\}\) denotes the fractional part. Since \(x\) is uniform
    on \([0,1)\), translation modulo \(1\) preserves the uniform
    distribution, and therefore
    \(
    \varepsilon\sim\operatorname{Uniform}[0,1).
    \)
    Hence, the expected ratio is at most
    \[
    \frac{q}{q-1}\,
    \E [q^\varepsilon] 
    =
    \frac{q}{q-1}
    \int_0^1 q^z\,dz 
    =
    \frac{q}{q-1}
    \frac{q-1}{\ln q} 
    =
    \frac{q}{\ln q}.
    \]

    The fact that \(\frac{q}{\ln q}\) is decreasing on \(q \in (1,e)\) can be seen by differentiating the function with respect to \(q\).
\end{proof}

Given a base $q>1$ and an arbitrary offset distribution density function $f_u$ on $[0,1)$, we denote $\A_{q, f}$ as a randomized periodic algorithm for online bidding where the $i^\text{th}$-bid is $q^{i-1+u}$ and the offset $u \sim f_u$.
We show that the uniform offset distribution is optimal in terms of the expected competitive ratio among all such algorithms.

\begin{lemma}[Optimality of the uniform offset] \label{lm:choice_of_uniform_offset}    
    Given any tail constraint $(\gamma, \delta)$, if there exists a feasible periodic algorithm $\A_{q,f_u}$ that satisfies the tail constraints, the periodic algorithm with uniform offset $\A_q$ is also feasible.
    Moreover, 
    \[
    \sup_{T \geq 1} \frac{\cost(\A_q(T))}{T} \leq 
    \sup_{T \ge 1} \frac{\cost(\A_{q, f_u}(T))}{T}.
    \]
\end{lemma}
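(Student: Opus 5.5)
The plan is to pass to logarithmic coordinates and use an averaging (symmetrization) argument over thresholds. Write $T=q^{t}$ and, for a fixed offset $u$, let $k$ be the first index with $q^{k-1+u}\ge T$. As in the proof of Lemma~\ref{lm:exp_comp_ratio}, set $\varepsilon=\{u-t\}\in[0,1)$. Then the realized ratio is an explicit function of $\varepsilon$ and $k$:
\[
\frac{q^u(q^k-1)}{(q-1)T}=\frac{q^{1+\varepsilon}-q^{1+\varepsilon-k}}{q-1}.
\]
For large $t$, so that $k$ is large, this is essentially $g(\varepsilon):=\frac{q^{1+\varepsilon}}{q-1}$, a strictly increasing function of $\varepsilon$. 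The correction $q^{1+\varepsilon-k}$ only lowers the ratio, and it vanishes as $T\to\infty$. The key observation is that the realized ratio depends on $u$ only through $\{u-t\}$, up to this vanishing term.

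For the expected ratio, fix $\theta\in[0,1)$ and consider the thresholds $T=q^{m+\theta}$ with $m\to\infty$. Under $\A_{q,f_u}$, the expected ratio on these thresholds tends to $\E_{u\sim f_u}[g(\{u-\theta\})]$. Hence
\[
\sup_T \E\!\left[\frac{\cost(\A_{q,f_u}(T))}{T}\right]\ \ge\ \sup_{\theta\in[0,1)} \E_{u}[g(\{u-\theta\})]\ \ge\ \int_0^1 \E_u[g(\{u-\theta\})]\,d\theta .
\]
By Fubini and the translation invariance of Lebesgue measure modulo $1$, the last integral equals $\int_0^1 g(z)\,dz=\frac{q}{\ln q}$. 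By Lemma~\ref{lm:exp_comp_ratio}, $\frac{q}{\ln q}$ upper-bounds the expected competitive ratio of $\A_q$, which gives the displayed inequality. (I read the displayed $\sup$ as the expected competitive ratio, consistent with the lemma's framing.)

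For feasibility, use the same representation. For any $T$, the event that the ratio exceeds $\gamma$ is contained in $\{g(\varepsilon)>\gamma\}=\{\varepsilon>c\}$, where $c=\log_q\frac{\gamma(q-1)}{q}$. Under $\A_q$ the variable $\varepsilon$ is uniform, so $\Pr[\text{ratio}>\gamma]\le \Pr[\varepsilon>c]=P_q(\gamma)$ uniformly in $T$; this is exactly the quantity $P_q(\gamma)$ of Theorem~\ref{thm:general_case}. Conversely, for $\A_{q,f_u}$ and thresholds $q^{m+\theta}$ with $m\to\infty$, the ratio tends to $g(\{u-\theta\})$. Let $c'$ be slightly larger than $c$. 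Then eventually in $m$, the event $\{\{u-\theta\}>c'\}$ implies that the ratio exceeds $\gamma$, so $\delta\ge \sup_\theta \Pr_u[\{u-\theta\}>c']$. Averaging over $\theta$ gives $\delta\ge 1-\max(0,\min(1,c'))$. Letting $c'\downarrow c$ yields $\delta\ge P_q(\gamma)$, so $\A_q$ satisfies $(\gamma,\delta)$.

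The main obstacle is the boundary handling in the feasibility part: the strict versus non-strict inequality at $\varepsilon=c$, and the fact that the finite-$k$ correction makes the ratio strictly less than $g(\varepsilon)$. Using $c'>c$ together with the limit $m\to\infty$ handles the direction needed for the lower bound on $\delta$. The direction needed for $\A_q$ is handled by the containment noted above, since the correction only decreases the ratio. For $k=1$ the ratio is at most $q^{u}/T<q$, which is covered by the same bound. A secondary technicality is that $f_u$ may be a general distribution rather than a density; the Fubini argument works verbatim for any probability measure on $[0,1)$.
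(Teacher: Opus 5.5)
Your proposal is correct and takes essentially the same route as the paper: both lower-bound the supremum over thresholds by an average over the threshold's position within one period on the base-$q$ logarithmic scale (the paper's ``average bound'' over $y\in(u,u+1]$), which yields $\frac{q}{\ln q}$ for the expected ratio and $P_q(\gamma)$ for the exceedance probability, and then compare with $\A_q$, whose expected competitive ratio is at most $\frac{q}{\ln q}$ and whose exceedance probability is at most $P_q(\gamma)$. Your version is more careful than the paper's. It makes explicit that only the supremum over $\theta$ is bounded below by $P_q(\gamma)$; the exceedance probability at a single large $T$ depends on $f_u$ and need not be. It also handles the vanishing correction term and the strict inequality via $c'\downarrow c$.
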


\begin{proof}
Fix an offset $u\in[0,1)$ and consider the corresponding deterministic realization $\dA_u$ of the randomized algorithm $\A_{q,f_u}$.
For any threshold $T$, the last bid of $\dA_u$ is $q^{k+u}$ where $q^{k-1+u} < T \leq q^{k+u}$.

\runtitle{Expected ratio of $\dA_u$ on threshold $T$.}
Since the last bid of $\dA_u$ on threshold $T$ is $q^{k+u}$, the total cost of $\dA_u(T) = q^u\cdot\frac{q^{k+1}-1}{q-1}$. 
Let $T = q^{k-1+y}$, where $y \in (u, u+1]$.
The ratio 
\[
\frac{\cost(\dA_u(T))}{T} = \frac{q^u\cdot\frac{q^{k+1}-1}{q-1}}{q^{k-1+y}}\approx\frac{q}{q-1}\cdot q^{u+1-y}.
\]

\runtitle{Expected competitive ratio of $\A_{q,f}$.}
Now we derive the expected competitive ratio of $\A_{q,f}$, where $f$ is an arbitrary offset density.
First, given a fixed threshold $T$, the expected ratio of $\A_{q,f_u}$ on $T$,
\[
\E_{u \sim f_u}\!\left[\frac{\cost(\dA_u(T))}{T}\right] = \frac{q}{q-1}\cdot \int_0^1 q^{u+1-y} f_u(u) du.
\]

Consider any legal $T$ with $y\in (u, u+1]$. By the average bound, the competitive ratio of $\A_{q,f_u}$,
\[
\sup_T \E_{u \sim f_u}\!\left[\frac{\cost(\dA_u(T))}{T}\right] \geq \frac{q}{q-1}\int_0^1 f_u(u) \Big(\int_u^{u+1} q^{u+1-y} dy\Big) du = \frac{q}{\ln q}.
\]

By Lemma~\ref{lm:exp_comp_ratio}, the expected competitive ratio of $\A_{q,f_u}$ is at least the expected competitive ratio of the algorithm $\A_q$, where the offset density function is uniform.

\runtitle{Feasibility of $\A_q$.}
Now, assume that $\A_{q,f_u}$ is feasible for the tail constraint $(\gamma, \delta)$. We are going to show that $\A_q$ is also feasible for the same tail constraint.
It suffices to show that, in the worst case, the probability that $\A_{q,f_u}$ exceeds $\gamma$ is at least $P_q(\gamma)$, the probability that $\A_q$ exceeds $\gamma$.

Consider an arbitrarily large threshold $T$. The probability that $\A_{q,f_u}$ exceeds $\gamma$, 
\[
\Pr\left[\frac{\cost(\A_{q,f_u}(T))}{T}>\gamma\right] \approx \Pr_{u\sim f_u}\left[\frac{q}{q-1}\cdot q^{u+1-y} > \gamma\right] = \Pr_{u\sim f_u}\left[y < u+1 - \log_q\frac{\gamma(q-1)}{q}\right],
\]
where $q^{k-1+u} < T = q^{k-1+y} \leq q^{k+u}$.

Since $y\in (u, u+1]$, $\Pr_{u\sim f_u}\left[y < u+1 - \log_q\frac{\gamma(q-1)}{q}\right] = 0$ if $\log_q\frac{\gamma(q-1)}{q}\geq 1$, and $\Pr_{u\sim f_u}\left[y < u+1 - \log_q\frac{\gamma(q-1)}{q}\right] = 1$ if $\log_q\frac{\gamma(q-1)}{q} < 0$.
Otherwise, when $\log_q\frac{\gamma(q-1)}{q} \in [0,1)$, the probability that the expected ratio of $\A_{q,f_u}$ exceeds $T$ is $\left(u+1-\log_q \frac{\gamma(q-1)}{q}\right) -u = 1-\log_q \frac{\gamma(q-1)}{q}$. By the three cases, for any arbitrarily large $T$, $\Pr\left[\frac{\cost(\A_{q,f_u}(T))}{T}>\gamma\right]$ is independent on the realized $u$ and is at least $P_q(\gamma)$.
\end{proof}

Since the uniform offset distribution minimizes the expected competitive ratio among all $\A_{q, f}$, we focus on $\A_q$.
We now derive the \(\gamma\)-exceedance probability of \(\A_q\).

\begin{lemma}[\(\gamma\)-exceedance probability of \(\A_q\)] \label{lm:exc_prob}
    For every \(T \ge 1\),
    \[
        \Pr\left[
            \frac{\cost(\A_q(T))}{T}>\gamma
        \right]
        \le P_q(\gamma),
    \]
    where
    \[
        P_q(\gamma)
        =
        \begin{cases}
            1,
            & \text{if } \gamma\le \frac{q}{q-1},\\
            1-\log_q\left(\frac{\gamma(q-1)}{q}\right),
            & \text{if } \frac{q}{q-1}<\gamma<\frac{q^2}{q-1},\\
            0,
            & \text{if } \gamma\ge\frac{q^2}{q-1}.
        \end{cases}
    \]
    Moreover,
    \[
        \lim_{T\to\infty}
        \Pr\left[
            \frac{\cost(\A_q(T))}{T}>\gamma
        \right]
        =
        P_q(\gamma).
    \]
    Consequently, the \(\gamma\)-exceedance probability of \(\A_q\) is \(P_q(\gamma)\).
\end{lemma}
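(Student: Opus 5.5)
Fix \(T\ge 1\) and reuse the parametrization from the proof of Lemma~\ref{lm:exp_comp_ratio}. Write \(t=\log_q T\), let \(k\) be the index of the final bid, and set \(\varepsilon=k-1+x-t\in[0,1)\). Since \(x\sim\operatorname{Uniform}[0,1)\) and \(\varepsilon=\{x-t\}\), the variable \(\varepsilon\) is uniform on \([0,1)\) for every fixed \(T\). The ratio is then expressed exactly in terms of \(\varepsilon\):
\[
\frac{\cost(\A_q(T))}{T}=\frac{q^x(q^k-1)}{(q-1)T}=\frac{q^{k+x}-q^x}{(q-1)q^{k-1+x-\varepsilon}}=\frac{q}{q-1}\,q^{\varepsilon}\bigl(1-q^{-k}\bigr).
\]
This is at most \(\frac{q}{q-1}q^{\varepsilon}\), which is the same upper bound used in Lemma~\ref{lm:exp_comp_ratio}.

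\textbf{The upper bound.} We have
\[
\Pr\Bigl[\frac{\cost(\A_q(T))}{T}>\gamma\Bigr]\le\Pr\Bigl[q^\varepsilon>\tfrac{\gamma(q-1)}{q}\Bigr]=\Pr\bigl[\varepsilon>L\bigr],\qquad L:=\log_q\tfrac{\gamma(q-1)}{q}.
\]
Because \(\varepsilon\) is uniform on \([0,1)\), this probability equals \(\min\{1,\max\{0,1-L\}\}\). There are three cases.
\begin{itemize}
\item If \(\gamma\le q/(q-1)\), then \(L\le 0\) and the bound is the trivial value \(1\).
\item If \(\gamma\ge q^2/(q-1)\), then \(L\ge1\) and the probability is \(0\). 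This case is also consistent with Lemma~\ref{lm:comp_ratio}.
\item If \(q/(q-1)<\gamma<q^2/(q-1)\), the probability is \(1-L\).
\end{itemize}
In each case the bound matches \(P_q(\gamma)\).

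\textbf{The limit.} A matching lower bound is needed as \(T\to\infty\), and the only loss is the factor \(1-q^{-k}\). Since \(k-1\ge t-x>t-1\), we get \(k>\log_q T\), so \(q^{-k}<1/T\), uniformly in \(x\). Hence the ratio is at least \(\frac{q}{q-1}q^\varepsilon(1-1/T)\), and
\[
\Pr\Bigl[\frac{\cost(\A_q(T))}{T}>\gamma\Bigr]\ge\Pr\bigl[\varepsilon>L-\log_q(1-1/T)\bigr].
\]
The threshold \(L_T:=L-\log_q(1-1/T)\) decreases to \(L\) as \(T\to\infty\). The map \(\ell\mapsto\min\{1,\max\{0,1-\ell\}\}\) is continuous, so the lower bound converges to \(P_q(\gamma)\). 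Sandwiching this with the upper bound gives the stated limit.

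\textbf{The final claim.} The \(\gamma\)-exceedance probability is a supremum over \(T\). The upper bound shows this supremum is at most \(P_q(\gamma)\). The limit shows the probabilities come arbitrarily close to \(P_q(\gamma)\). Therefore the supremum equals \(P_q(\gamma)\).

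\textbf{Main obstacle.} The delicate step is the limit argument. We must check that the correction \(1-q^{-k}\) tends to \(1\) uniformly over the offset \(x\), which the bound \(q^{-k}<1/T\) provides. We must also handle the boundary cases, where \(L\) lies exactly at \(0\) or \(1\) and the distinction between strict and non-strict inequality matters. Continuity of the clipped function settles the boundary cases, since \(\{\varepsilon=L\}\) has probability zero.
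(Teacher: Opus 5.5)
Your proof is correct and follows the paper's argument. It uses the same uniform overshoot $\varepsilon=\{x-t\}$, the same upper bound $\frac{q}{q-1}q^{\varepsilon}$, and a lower sandwich that tends to $P_q(\gamma)$ by continuity; the only cosmetic difference is that you write the correction multiplicatively as $1-q^{-k}$ with $q^{-k}<1/T$ and shift $L$, whereas the paper writes it additively as $-\frac{q^{x}}{(q-1)T}$ with $q^{x}<q$ and shifts $\gamma$ to $\gamma+\frac{q}{(q-1)T}$.
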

\begin{proof}
    For completeness, we include part of the proof of Lemma~\ref{lm:exp_comp_ratio}.
    Fix a threshold \(T \ge 1\), and write
    \(
    t:=\log_q T.
    \)
    For a realization \(x\in[0,1)\), let \(k\) be the index of the first bid that reaches or exceeds \(T\).
    Since the bids are
    \(
    q^x,q^{1+x},q^{2+x},\ldots,
    \)
    the terminal bid is \(q^{k-1+x}\), where
    \(
    k-1=\lceil t-x\rceil.
    \)
    Indeed, \(t-x>-1\), so \(k-1\) is always a nonnegative integer.
    
    Define the overshoot
    \(
    \varepsilon
    :=
    k-1+x-t
    =
    \lceil t-x\rceil-(t-x).
    \)
    Equivalently,
    \(
    \varepsilon=\{x-t\},
    \)
    where \(\{\cdot\}\) denotes the fractional part. Since \(x\) is uniform on \([0,1)\), translation modulo \(1\) preserves the uniform distribution, and hence
    \(
    \varepsilon\sim\operatorname{Uniform}[0,1).
    \)
    
    For this realization, the competitive ratio is
    \[
    \begin{aligned}
    \frac{\cost(\A_q(T))}{\cost(\OPT(T))}
    &=
    \frac{\sum_{i=0}^{k-1}q^{i+x}}{T}
    =
    \frac{q^x(q^{k}-1)}{(q-1)q^t}\\
    &=
    \frac{q^\varepsilon(q-q^{-(k-1)})}{q-1}
    =
    \frac{q}{q-1}q^\varepsilon
    -\frac{q^x}{(q-1)T}.
    \end{aligned}
    \]

    Since \(1\le q^x<q\), we have
    \[
    \frac{q}{q-1}q^\varepsilon-\frac{q}{(q-1)T}
    \le
    \frac{\cost(\A_q(T))}{\cost(\OPT(T))}
    \le
    \frac{q}{q-1}q^\varepsilon-\frac{1}{(q-1)T}
    \le
    \frac{q}{q-1}q^\varepsilon.
    \]
    In addition,
    \(
    \Pr\!\left[
    \frac{q}{q-1}q^\varepsilon - \frac{q}{(q-1)T}
    >\gamma
    \right]
    =
    \Pr\!\left[
    \frac{q}{q-1}q^\varepsilon
    >\gamma+\frac{q}{(q-1)T}
    \right].
    \)
    Consequently,
    \[
    \Pr\!\left[
    \frac{q}{q-1}q^\varepsilon
    >\gamma+\frac{q}{(q-1)T}
    \right]
    \le
    \Pr\!\left[
    \frac{\cost(\A_q(T))}{\cost(\OPT(T))}>\gamma
    \right]
    \le
    \Pr\!\left[
    \frac{q}{q-1}q^\varepsilon>\gamma
    \right].
    \]

    Since \(\varepsilon\sim\operatorname{Uniform}[0,1)\), we have
    \(
    \frac{q}{q-1}
    \le
    \frac{q}{q-1}q^\varepsilon
    <
    \frac{q^2}{q-1}.
    \)
    Thus, for any \(\gamma'\), the exceedance probability is \(1\) if \(\gamma' \le q/(q-1)\), and \(0\) if \(\gamma' \ge q^2/(q-1)\).
    For the remaining case,
    \(
    \frac{q}{q-1}< \gamma' < \frac{q^2}{q-1},
    \)
    we have
    \[
    \begin{aligned}
    \Pr\!\left[
    \frac{q}{q-1}q^\varepsilon > \gamma'
    \right]
    &=
    \Pr\!\left[
    \varepsilon >
    \log_q\!\left(\frac{\gamma' (q-1)}{q}\right)
    \right]\\
    &=
    1-\log_q\!\left(\frac{\gamma' (q-1)}{q}\right).
    \end{aligned}
    \]
    Therefore, by the definition of \(P_q\),
    \[
    \Pr\!\left[
    \frac{q}{q-1}q^\varepsilon> \gamma'
    \right]
    =P_q(\gamma')
    \]
    for every \(\gamma' \in \mathbb R\). Applying this identity with \(\gamma'=\gamma+\frac{q}{(q-1)T}\) and \(\gamma'=\gamma\) to the probability bounds above yields
    \[
    P_q\!\left(\gamma+\frac{q}{(q-1)T}\right)
    \le
    \Pr\!\left[
    \frac{\cost(\A_q(T))}{\cost(\OPT(T))}>\gamma
    \right]
    \le P_q(\gamma).
    \]

    As \(P_q\) is continuous, letting \(T\to\infty\) and applying the squeeze theorem gives
    \[
    \lim_{T\to\infty}
    \Pr\!\left[
    \frac{\cost(\A_q(T))}{\cost(\OPT(T))}>\gamma
    \right]
    =P_q(\gamma).
    \]

    Together with the upper bound above, this implies
    \(
    \sup_{T \ge 1}
    \Pr\!\left[
    \frac{\cost(\A_q(T))}{\cost(\OPT(T))}>\gamma
    \right]
    =P_q(\gamma).
    \)
    Hence the \(\gamma\)-exceedance probability of \(\A_q\) is \(P_q(\gamma)\).
\end{proof}

\runtitle{Proof of Theorem~\ref{thm:general_case}.}
    Consider a tail constraint $(\gamma, \delta)$, where $\gamma \ge \gmlb$ and $\delta \in (0, 1)$.
    By Lemma~\ref{lm:exc_prob}, the $\gamma$-exceedance probability is given by
    \(
        P_q(\gamma).
    \)
    
    For $0 < \delta \le P_e(\gamma)$, since $\gamma \ge \gmlb$, the relevant solutions to $P_q(\gamma) = \delta$ lie in the range $(1,e]$.
    By Lemma~\ref{lm:exp_comp_ratio}, the expected competitive ratio decreases as $q$ increases over $(1,e)$.
    Therefore, we select $q^*$ as the largest solution to $P_q(\gamma) = \delta$ in $(1,e]$, if such a solution exists.
    
    For $P_e(\gamma) < \delta < 1$, under the same assumption $\gamma \ge \gmlb$, the equation $P_q(\gamma) = \delta$ may admit two solutions, $q_1$ and $q_2$, satisfying $P_{q_1}(\gamma) = P_{q_2}(\gamma) > P_e(\gamma)$.
    However, since $P_e(\gamma) < \delta$, choosing $q=e$ satisfies the tail constraint.
    By Lemma~\ref{lm:exp_comp_ratio}, $q=e$ minimizes the expected competitive ratio.
    Thus, we set $q^*=e$.
    
    In either case, the expected competitive ratio is $\frac{q^*}{\ln q^*}$.
    
    For multiple tail constraints, we choose $q^* = \min_i q_i^*$, provided that this choice satisfies all the constraints.
    \qed

\begin{corollary}[Upper bound for tail constraints with \(\delta_i=0\)] \label{cor:delta_0_ub}
    For a tail constraint $(\gamma_i,0)$ with $\gamma_i \ge 4$, the expected competitive ratio of the online bidding problem is at most
    \(
        \frac{q_i^*}{\ln q_i^*},
    \)
    where
    \[
        q_i^* =
        \begin{cases}
            \frac{\gamma_i+\sqrt{\gamma_i(\gamma_i-4)}}{2},
            & \text{if } 4 \le \gamma_i \le \frac{e^2}{e-1}, \\
            e,
            & \text{if } \gamma_i > \frac{e^2}{e-1}.
        \end{cases}
    \]

    For tail constraints $\{(\gamma_i,0)\}_{i=1}^m$,
    the expected competitive
    ratio is at most
    \(
        \frac{q^*}{\ln q^*},
    \)
    where
    \(
        q^* = \min_{i\in[m]} q_i^*.
    \)
\end{corollary}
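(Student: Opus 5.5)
The plan is to use the periodic algorithm $\A_q$ with $q=q_i^*$ and to combine two earlier results. Lemma~\ref{lm:exc_prob} controls the zero-violation requirement, and Lemma~\ref{lm:exp_comp_ratio} gives the expected ratio. By Lemma~\ref{lm:exc_prob}, $\A_q$ has $\gamma$-exceedance probability $0$ exactly when $\gamma \ge \frac{q^2}{q-1}$. Equivalently, by Lemma~\ref{lm:comp_ratio}, every realization then has ratio below $\frac{q^2}{q-1}\le\gamma$. So the whole task is to find the largest $q\in(1,e]$ with $\frac{q^2}{q-1}\le \gamma_i$ and check that it equals $q_i^*$.

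First, solve $\frac{q^2}{q-1}=\gamma$. This is $q^2-\gamma q+\gamma=0$, the same characteristic polynomial as in Lemma~\ref{lm:closed_form}. Its roots are $\frac{\gamma\pm\sqrt{\gamma(\gamma-4)}}{2}$, which are real for $\gamma\ge4$, and the larger root is $G_\gamma$. The smaller root is at most $2$ and the larger root is at least $2$. By the monotonicity in Lemma~\ref{lm:comp_ratio}, $\frac{q^2}{q-1}$ is increasing on $(2,e)$, so on $[2,e]$ the condition $\frac{q^2}{q-1}\le\gamma$ holds precisely for $q\le G_\gamma$. The two cases then split as follows.
\begin{itemize}
\item If $4\le\gamma_i\le\frac{e^2}{e-1}$, then $G_{\gamma_i}\le e$, since $\frac{e^2}{e-1}$ is the value of $\frac{q^2}{q-1}$ at $q=e$. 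Take $q=G_{\gamma_i}\in[2,e]$. This is feasible by equality $\frac{q^2}{q-1}=\gamma_i$.
\item If $\gamma_i>\frac{e^2}{e-1}$, take $q=e$. Then $\frac{e^2}{e-1}<\gamma_i$, so the choice is feasible.
\end{itemize}
In both cases Lemma~\ref{lm:exp_comp_ratio} bounds the expected competitive ratio by $\frac{q_i^*}{\ln q_i^*}$.

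For multiple constraints, set $q^*=\min_i q_i^*$. This lies in $[2,e]$, and $q^*\le q_i^*$ for each $i$. Because $\frac{q^2}{q-1}$ is increasing on $[2,e]$, we get $\frac{(q^*)^2}{q^*-1}\le \frac{(q_i^*)^2}{q_i^*-1}\le\gamma_i$. Hence $\A_{q^*}$ satisfies every constraint $(\gamma_i,0)$, and its expected ratio is at most $\frac{q^*}{\ln q^*}$.

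The main point requiring care is the boundary between the two regimes of $P_q$. At equality $\gamma=\frac{q^2}{q-1}$, Lemma~\ref{lm:exc_prob} returns $P_q(\gamma)=0$, and this is consistent with the strict bound of Lemma~\ref{lm:comp_ratio}. The argument also relies on $\gamma_i\ge4$, which guarantees that $G_{\gamma_i}$ is real and at least $2$, so that the monotonicity region $(2,e)$ applies. The edge case $\gamma_i=4$, where $q=2$, is covered by continuity, or directly since $\frac{4}{1}=4$.
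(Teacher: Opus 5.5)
Your proposal is correct and follows the paper's proof. Both run $\A_q$ with $q=q_i^*$, reduce zero exceedance to $q^2-\gamma_i q+\gamma_i\le 0$ via Lemma~\ref{lm:exc_prob} (with Lemma~\ref{lm:comp_ratio} for the case $q=e$), take the larger root $G_{\gamma_i}$ or $e$, and bound the expected ratio by Lemma~\ref{lm:exp_comp_ratio}. Your use of the monotonicity of $\frac{q^2}{q-1}$ on $[2,e]$ to show that $q^*=\min_i q_i^*$ satisfies every constraint fills in a step the paper only asserts.
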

\begin{proof}
    By Lemma~\ref{lm:exc_prob}, the $\gamma$-exceedance probability is given by
    \(
        P_q(\gamma).
    \)
    Since $\delta = 0$, we require $P_q(\gamma) = 0$, which is equivalent to
    \(
        q^2 - \gamma q + \gamma \le 0.
    \)
    For $4 \le \gamma \le \frac{e^2}{e-1}$, the largest feasible value of $q$ is the larger root of the corresponding quadratic equation, given by
    \(
        q^*
        =
        \frac{\gamma + \sqrt{\gamma(\gamma-4)}}{2}.
    \)
    On the other hand, by Lemma~\ref{lm:comp_ratio}, the competitive ratio of $\A_e$ is at most $\frac{e^2}{e-1}$.
    Hence, if $\gamma > \frac{e^2}{e-1}$, $\A_e$ satisfies the tail constraint.
    Furthermore, by Lemma~\ref{lm:exp_comp_ratio}, $\A_e$ minimizes the expected competitive ratio.
    Thus, we choose $q^* = e$ in this case.
    
    For multiple tail constraints, we choose $q^* = \min_i q_i^*$ to ensure that all tail constraints are satisfied.
\end{proof}

Corollary~\ref{cor:delta_0_ub} and Lemma~\ref{lm:delta_0_lb} together prove Theorem~\ref{thm:special_case}.

\runtitle{AI Tools Usage Disclosure.}
ChatGPT was used to assist with English language polishing, including grammar correction, sentence clarity, and readability,
as well as with determining the range of $\gamma_i$ in Theorem~\ref{thm:general_case}.

\bibliography{references}

@article{BasiakBienkowskiShinTatarczuk26,
  author  = {Mateusz Basiak and Marcin Bie{\'n}kowski and
             Yongho Shin and Agnieszka Tatarczuk},
  title   = {Pure Tail Constraints for Online Problems},
  journal = {CoRR},
  volume  = {abs/2609.35337},
  year    = {2026},
  eprint  = {2609.35337},
  archivePrefix = {arXiv},
  url     = {https://arxiv.org/abs/2609.35337}
}

@inproceedings{DBLP:conf/soda/DinitzILMV24,
  author    = {Michael Dinitz and
               Sungjin Im and
               Thomas Lavastida and
               Benjamin Moseley and
               Sergei Vassilvitskii},
  title     = {Controlling Tail Risk in Online Ski-Rental},
  booktitle = {Proceedings of the 2024 ACM-SIAM Symposium on Discrete Algorithms
               (SODA)},
  pages     = {4247--4263},
  publisher = {SIAM},
  year      = {2024},
  doi       = {10.1137/1.9781611977912.147}
}

@inproceedings{DBLP:conf/waoa/CuiD25,
  author    = {Qiming Cui and
               Michael Dinitz},
  title     = {Controlling Tail Risk in Two-Slope Ski Rental},
  booktitle = {Workshop on Approximation and Online Algorithms (WAOA)},
  pages     = {81--96},
  year      = {2025}
}

@inproceedings{DBLP:conf/colt/Christianson0LW24,
  author    = {Nicolas Christianson and
               Bo Sun and
               Steven H. Low and
               Adam Wierman},
  title     = {Risk-Sensitive Online Algorithms (Extended Abstract)},
  booktitle = {Conference on Learning Theory (COLT)},
  pages     = {1140--1141},
  year      = {2024}
}

@article{chrobak2008incremental,
  title={Incremental medians via online bidding},
  author={Chrobak, Marek and Kenyon, Claire and Noga, John and Young, Neal E},
  journal={Algorithmica},
  volume={50},
  number={4},
  pages={455--478},
  year={2008},
  publisher={Springer}
}

\end{document}